\documentclass[11pt, reqno]{amsart}

\usepackage{amsmath, amsthm, amscd, amsfonts, amssymb, graphicx, color}
\usepackage[bookmarksnumbered, colorlinks, plainpages]{hyperref}
\usepackage{dsfont}
\input{mathrsfs.sty}
\usepackage{subfigure}
\usepackage{bbm,mathtools}
\usepackage{booktabs}
\usepackage{enumitem}
 \makeatletter
\let\reftagform@=\tagform@
\def\tagform@#1{\maketag@@@{(\ignorespaces\textcolor{blue}{#1}\unskip\@@italiccorr)}}
\renewcommand{\eqref}[1]{\textup{\reftagform@{\ref{#1}}}}
\makeatother
\usepackage{hyperref}
\hypersetup{colorlinks=true, linkcolor=red, anchorcolor=green,
citecolor=cyan, urlcolor=red, filecolor=magenta, pdftoolbar=true}

\newtheorem{theorem}{Theorem}[section]
\newtheorem{lemma}[theorem]{Lemma}
\newtheorem{proposition}[theorem]{Proposition}
\newtheorem{corollary}[theorem]{Corollary}
\theoremstyle{definition}
\newtheorem{definition}[theorem]{Definition}
\newtheorem{example}[theorem]{Example}

\theoremstyle{remark}
\newtheorem{remark}[theorem]{Remark}
\numberwithin{equation}{section}

\newcommand{\uinorm}[1]{{\left\vert\kern-0.25ex\left\vert\kern-0.25ex\left\vert #1
 \right\vert\kern-0.25ex\right\vert\kern-0.25ex\right\vert}}

\newcommand{\Tr}{\mathrm{Tr}}
\newcommand{\ran}{\mathrm{ran}}

\begin{document}

\setcounter{page}{1}

\title[Finite-Corner Reconstruction]{Finite-Corner Reconstruction  and Separation for Cones of Normal Maps}

\author[M. Kian and F. Kittaneh]{Mohsen Kian$^1$ and Fuad Kittaneh$^{2,3}$ }

\address{$^1$ Mohsen Kian: Department of Mathematics, University of Bojnord, P.O. Box
1339, Bojnord 94531, Iran}
\email{kian@ub.ac.ir }

\address{$^2$ Fuad Kittaneh: Department of Mathematics,  The University of Jordan, Amman, Jordan}
\address{$^3$ Fuad Kittaneh: Department of Mathematics,  Korea University, Seoul 02841, South Korea}
\email{fkitt@ju.edu.jo}

\subjclass[2020]{Primary:   47L07, 46L07     Secondary:   81P40.}

\keywords{Normal maps, finite-dimensional corners, cut--pad stable cones,
coherent corner systems, point-ultraweak topology, Choi matrices,
finite-corner separation.}

\begin{abstract}
We develop a finite-corner framework for convex cones of normal maps between \(\mathcal B(\mathcal H)\) and \(\mathcal B(\mathcal K)\). The basic structural assumption is stability under finite cut--pad operations. We prove that every point-ultraweakly closed cut--pad stable cone is completely determined by its finite-dimensional corner cones, and conversely that every coherent family of closed finite-dimensional corner cones admits a unique global realization of this type. For a cut--pad stable cone that is not necessarily point-ultraweakly closed, reconstruction from the norm closures of its corner cones yields exactly its point-ultraweak closure. Combining this reconstruction principle with finite-dimensional separation and the Choi representation, we show that non-membership in a global cone is always detected on a single finite-dimensional corner by a finite-dimensional witness. We illustrate the framework for completely positive and decomposable maps; in the decomposable case, the Hermitian parts of the finite-corner witnesses are PPT.
\end{abstract}
\maketitle

 \section{Introduction}
The study of linear maps between operator algebras is central to both
operator theory and quantum information. In finite dimensions, the
Choi--Jamio\l{}kowski correspondence associates a linear map
\[
\Phi:\mathbb{M}_n\to \mathbb{M}_k
\]
with an operator in $\mathbb{M}_n\otimes \mathbb{M}_k$
\cite{Jamiolkowski1972,Choi1975}. In particular, Choi's theorem
characterizes complete positivity of $\Phi$ by positivity of its Choi
matrix \cite{Choi1975}. In quantum information, the  idea  is
closely related to the positive-map criterion for separability
\cite{Horodecki1996} and to entanglement-witness formulations
\cite{Terhal2000}.

In infinite dimensions, the same picture is more delicate. Holevo
formulated an infinite-dimensional Choi--Jamio\l{}kowski correspondence
in terms of positive semidefinite forms on a suitable dense subspace,
which in general need not be closable \cite{Holevo2011CJ}. Thus a
bounded global Choi operator is not available in general in the same
way as for maps between matrix algebras. This suggests replacing a global Choi object by the ordinary Choi matrices of finite-dimensional compressions. 

Mapping cones provide a related structural setting.  In the
matrix-algebra setting, mapping cones are closed cones of positive maps
that are invariant under pre- and post-composition by completely
positive maps. Johnston and St{\o}rmer showed that every mapping cone
determines a unique operator system with a certain homogeneity property,
and conversely that operator systems with this property give rise to
mapping cones \cite{Johnston2012}. In a different, infinite-dimensional
module setting, Magajna associated Choi-type operators with
weak$^*$-continuous bimodule maps, studied cones of such maps together
with corresponding dual cones, and showed that, in an appropriate
setting, these notions reduce to those considered earlier by
St{\o}rmer \cite{Magajna2021}.

Infinite-dimensional Choi representations have also been developed   in recent works. Han, Kye, and St{\o}rmer associate, for a
class of linear maps on a von Neumann factor, bounded and trace-class
operators that play the role of Choi matrices, and use these objects to
characterize complete positivity and study positivity
\cite{Han2024}. Vom Ende develops a modified Choi formalism on
separable Hilbert spaces and uses it to obtain unique decompositions of
generators of norm-continuous semigroups of completely positive maps
\cite{Ende2024}.

Positive maps have also been studied through finer notions of positivity and decomposability. Osaka classified families of positive
maps in low-dimensional matrix algebras according to their degree of
indecomposability and obtained results concerning atomicity
\cite{Osaka1991}. He subsequently constructed examples of positive maps
that cannot be expressed as the sum of a $2$-positive map and a
$2$-copositive map \cite{Osaka1993}. More recently, M{\l}ynik, Osaka,
and Marciniak characterized $k$-positivity in terms of Ky Fan norm
estimates associated with Kraus operators and applied this
characterization to parameterized families of positive maps, including
the determination of decomposability regions \cite{Mlynik2025}. In the
$C^*$-algebraic setting, Bhat and Osaka established factorization
results for positive maps and showed, in particular, that if
$\tau\otimes\operatorname{id}_k$ is decomposable for some $k\geq 2$,
then $\tau$ is completely positive \cite{BhatOsaka2020}.

Our starting observation     is that, although a bunded global Choi operator need not exist in infinite dimensions, every finite-dimensional compression of a normal map has an ordinary Choi matrix. It is therefore natural to ask whether the collection of all such finite corners can serve as a substitute for a single global Choi representation. We consider this question for cones that are stable under finite cut--pad operations. The resulting problem is to determine when compatible finite-dimensional cone data determine the global cone.

We show that, under point-ultraweak closedness, this finite-corner information is sufficient: coherent systems of closed finite-dimensional corner cones correspond precisely to point-ultraweakly closed cut--pad stable cones of normal maps. For cones that are not point-ultraweakly closed, the corner data determine precisely the point-ultraweak closure. Combined with ordinary finite-dimensional separation, the reconstruction principle further implies that failure of global membership is already visible on a single finite-dimensional corner through a Choi-type witness. Thus finite corners serve two purposes: they reconstruct the global cone and provide finite-dimensional witnesses for non-membership.

 \section{Preliminaries}

In this section, we fix notation and review essential facts regarding operator topologies, normal maps, and the Choi--Jamio\l{}kowski isomorphism.

\subsection{Operator Spaces and Topologies}
Let $\mathcal H$ and $\mathcal K$ be complex Hilbert spaces. We denote by $\mathcal B(\mathcal H)$ the Banach space of bounded linear operators on $\mathcal H$ equipped with the operator norm $\|\cdot\|$, and by $\mathcal T(\mathcal H)$ the Banach space of trace-class operators equipped with the trace norm $\|T\|_1 := \Tr|T|$.

The space $\mathcal B(\mathcal H)$ is the isometric dual of $\mathcal T(\mathcal H)$ under the canonical bilinear pairing
\[
\langle \omega, X \rangle := \Tr(\omega X), \quad \omega \in \mathcal T(\mathcal H), \ X \in \mathcal B(\mathcal H).
\]
The \emph{ultraweak topology} (or weak$^*$ topology) on $\mathcal B(\mathcal H)$ is the topology induced by this pairing. A net $(X_\alpha)$ converges ultraweakly to $X$ if $\Tr(\omega X_\alpha) \to \Tr(\omega X)$ for all $\omega \in \mathcal T(\mathcal H)$.

We also refer to the \emph{strong operator topology} (SOT), where $X_\alpha \to X$ strongly if $\|(X_\alpha - X)\xi\| \to 0$ for all vectors $\xi \in \mathcal H$. Note that for a bounded net of operators, SOT convergence implies ultraweak convergence.

\subsection{Normal Linear Maps}
A linear map $\Phi: \mathcal B(\mathcal H) \to \mathcal B(\mathcal K)$ is called \emph{normal} if it is ultraweakly continuous. We denote the space of all normal maps by $\mathcal N(\mathcal B(\mathcal H), \mathcal B(\mathcal K))$.
We equip this space with the \emph{point-ultraweak topology}: a net of maps $(\Phi_\alpha)$ converges to $\Phi$ if
\[
\Phi_\alpha(X) \longrightarrow \Phi(X) \quad \text{ultraweakly in } \mathcal B(\mathcal K) \quad \text{for all } X \in \mathcal B(\mathcal H).
\]
For a subset $\mathsf S \subseteq
\mathcal N\bigl(\mathcal B(\mathcal H),\mathcal B(\mathcal K)\bigr)$, we write $\overline{\mathsf S}^{\,\text{puw}}$ for its closure in $\mathcal N\bigl(\mathcal B(\mathcal H),\mathcal B(\mathcal K)\bigr)$. equipped with the point-ultraweak topology.
\subsection{Finite Corners and the Choi Matrix}
Let $\mathcal P_{\mathrm{fin}}(\mathcal H)$ denote the set of finite-rank orthogonal projections on $\mathcal H$. We define the directed set of corners $\mathcal D$ as the collection of pairs $(Q,P) \in \mathcal P_{\mathrm{fin}}(\mathcal K) \times \mathcal P_{\mathrm{fin}}(\mathcal H)$, ordered by range inclusion:
\[
(Q,P) \preceq (Q',P') \iff \ran Q \subseteq \ran Q' \text{ and } \ran P \subseteq \ran P'.
\]
For a fixed corner $(Q,P) \in \mathcal D$, let $\mathcal B_P := \mathcal B(\ran P) \cong P\mathcal B(\mathcal H)P$. We denote the space of linear maps between these finite-dimensional algebras by
\[
\mathcal L_{Q,P} := \mathcal L(\mathcal B_P, \mathcal B(\ran Q)).
\]
To handle duality explicitly, we fix a system of matrix units $(E_{ij})_{i,j=1}^m$ for $\mathcal B_P$ (where $m = \dim\ran P$). The \emph{Choi matrix} of a map $\psi \in \mathcal L_{Q,P}$ is defined as
\[
C_\psi := \sum_{i,j=1}^m E_{ij} \otimes \psi(E_{ij}) \in \mathcal B(\ran P \otimes \ran Q).
\]
The map $\psi \mapsto C_\psi$ is a linear isomorphism (the Choi--Jamio\l{}kowski isomorphism). It is well-known that $\psi$ is completely positive (CP) if and only if $C_\psi \ge 0$.

 \subsection{Convex Geometry and Duality}
While $\mathcal B(\mathcal H)$ and $\mathcal T(\mathcal H)$ are complex Banach spaces, the cones of interest in this work (such as completely positive maps) are invariant only under non-negative real scalars, not arbitrary complex scalars. Consequently, the geometric analysis of these cones requires viewing the ambient spaces as real vector spaces (by restriction of scalars).

Accordingly, the appropriate dual pairing is the real part of the canonical pairing:
\[
\langle \omega, X \rangle_{\mathbb{R}} := \Re \Tr(\omega X), \quad \omega \in \mathcal T(\mathcal H), \ X \in \mathcal B(\mathcal H).
\]
Note that for self-adjoint operators $\omega, X$, this reduces to the standard trace pairing. Similarly, on the finite corners, we utilize the real inner product induced by the Hilbert--Schmidt norm:
\[
\langle \psi, \phi \rangle_{\mathcal L} := \Re \Tr(C_\psi^* C_\phi).
\]

\section{Finite-to-Infinite Reconstruction Theorem}

We now introduce finite compression and cut--pad operations and formulate the compatibility conditions for systems of corner cones. The main result of this section identifies coherent systems of closed finite-dimensional corner cones with point-ultraweakly closed cut--pad stable cones of normal maps.
\medskip

Let $P\in\mathcal B(\mathcal H)$ and $Q\in\mathcal B(\mathcal K)$ be finite-rank projections.
We identify the corner algebras
\[
\mathcal B(\ran P)\ \cong\ P\,\mathcal B(\mathcal H)\,P,
\qquad
\mathcal B(\ran Q)\ \cong\ Q\,\mathcal B(\mathcal K)\,Q,
\]
via the natural compression/extension by $P$ and $Q$.

\begin{definition}\label{def:corner-cutpad}
Let $\Phi\in\mathcal N(\mathcal B(\mathcal H),\mathcal B(\mathcal K))$ and let $P,Q$ be finite-rank projections.
\begin{enumerate}
\item[(a)] The \emph{compression} of $\Phi$ to the corner $(Q,P)$ is the linear map
\[
\Phi_{Q,P}:\mathcal B(\ran P)\to \mathcal B(\ran Q),
\qquad
\Phi_{Q,P}(X):=Q\,\Phi(X)\,Q,\quad X\in P\mathcal B(\mathcal H)P\cong \mathcal B(\ran P).
\]

\item[(b)] The \emph{cut--pad} map of $\Phi$ at $(Q,P)$ is the normal map
\[
\Phi^{(Q,P)}:\mathcal B(\mathcal H)\to\mathcal B(\mathcal K),
\qquad
\Phi^{(Q,P)}(T):=Q\,\Phi(PTP)\,Q,\quad T\in\mathcal B(\mathcal H).
\]
\end{enumerate}
\end{definition}

\begin{remark}
\label{rem:same-formula}
The formulas for $\Phi_{Q,P}$ and $\Phi^{(Q,P)}$ are identical, but the domains differ:
$\Phi_{Q,P}$ is defined only on the finite corner algebra $P\mathcal B(\mathcal H)P$,
whereas $\Phi^{(Q,P)}$ is a map on all of $\mathcal B(\mathcal H)$.
Moreover,
\[
\bigl(\Phi^{(Q,P)}\bigr)_{Q,P}=\Phi_{Q,P}.
\]
\end{remark}

\begin{definition}\label{def:cut-pad-stable}
A subset $\mathsf K\subseteq \mathcal N(\mathcal B(\mathcal H),\mathcal B(\mathcal K))$ is called a
\emph{cut--pad stable cone} if:
\begin{enumerate}
\item[(i)] $\mathsf K$ is a convex cone (closed under $+$ and multiplication by $\lambda\ge 0$);
\item[(ii)] for every $\Phi\in\mathsf K$ and every pair of finite-rank projections $P$ and $Q$,
\[
\Phi^{(Q,P)}\in \mathsf K.
\]
\end{enumerate}
We say $\mathsf K$ is \emph{point-ultraweakly closed} if it is closed in the point-ultraweak topology.
\end{definition}

\begin{remark} 
The cut--pad stability condition is automatic for cones that are stable
under pre- and post-composition by normal completely positive maps.
Indeed, for finite-rank projections $P$ and $Q$, the maps
\[
\beta_P(T)=PTP,
\qquad
\alpha_Q(Y)=QYQ
\]
are normal completely positive, and
\[
\Phi^{(Q,P)}=\alpha_Q\circ\Phi\circ\beta_P.
\]
Thus cut--pad stability is, in particular, weaker than requiring stability
under arbitrary normal completely positive pre- and post-compositions.
\end{remark}

\begin{definition}\label{def:corner-cone}
Let $\mathsf K$ be a cut--pad stable cone. For finite-rank projections $P,Q$ define its \emph{corner cone} by
\[
\mathsf K_{Q,P}:=\{\Phi_{Q,P}:\Phi\in\mathsf K\}.
\]
(Here $\mathsf K_{Q,P}$ is a cone of maps $\mathcal B(\ran P)\to\mathcal B(\ran Q)$; in finite dimension,
all maps are automatically normal.)
\end{definition}

\subsection*{Directed system of corners}

If $(Q,P)\preceq (Q',P')$, define the restriction map
\[
\rho_{(Q',P')\to(Q,P)}:
\mathcal L_{Q',P'}\longrightarrow\mathcal L_{Q,P}
\]
by
\[
\bigl(\rho_{(Q',P')\to(Q,P)}(\psi)\bigr)(X)
:=
Q\,\psi(X)\,Q,
\qquad
X\in\mathcal B(\ran P),
\]
where $\mathcal B(\ran P)$ is identified with the corresponding corner
of $\mathcal B(\ran P')$.

Similarly, if $(Q,P)\preceq (Q',P')$ and $\psi\in\mathcal L_{Q,P}$, define the \emph{padding (extension-by-zero)} map
\[
\iota_{(Q,P)\to(Q',P')}:\mathcal L_{Q,P}\to\mathcal L_{Q',P'},
\qquad
\bigl(\iota_{(Q,P)\to(Q',P')}(\psi)\bigr)(T):=Q'\,\psi(PTP)\,Q',
\quad T\in\mathcal B(\ran P').
\]

\subsection*{Coherent corner data and the induced infinite cone}
\begin{definition}\label{def:coherent-family}
A family $\mathcal C=\{\mathcal C_{Q,P}\subseteq \mathcal L_{Q,P}\}_{(Q,P)\in\mathcal D}$ is called \emph{coherent} if:
\begin{enumerate}
\item[(C1)] each $\mathcal C_{Q,P}$ is a closed convex cone in the finite-dimensional space $\mathcal L_{Q,P}$;
\item[(C2)] (\emph{restriction compatibility}) if $(Q,P)\preceq (Q',P')$, then
\[
\rho_{(Q',P')\to(Q,P)}(\mathcal C_{Q',P'})\subseteq \mathcal C_{Q,P};
\]
\item[(C3)] (\emph{padding compatibility}) if $(Q,P)\preceq (Q',P')$, then
\[
\iota_{(Q,P)\to(Q',P')}(\mathcal C_{Q,P})\subseteq \mathcal C_{Q',P'}.
\]
\end{enumerate}
\end{definition}

\begin{definition}\label{def:K-of-C}
Given a coherent family $\mathcal C$, define
\[
\mathsf K(\mathcal C)
:=
\Bigl\{\Phi\in \mathcal N(\mathcal B(\mathcal H),\mathcal B(\mathcal K)):\
\Phi_{Q,P}\in \mathcal C_{Q,P}\ \text{for all }(Q,P)\in\mathcal D\Bigr\}.
\]
\end{definition}

\medskip
We first record the trace-norm approximation needed in the reconstruction proof. 
\begin{lemma}\label{lem:trace-approx}
Let $\mathcal H$ be a Hilbert space and let $(P_\alpha)_\alpha$ be a net of finite-rank projections on $\mathcal H$
such that $P_\alpha\to I_{\mathcal H}$ strongly.
Then for every trace-class operator $S\in\mathcal T(\mathcal H)$,
\[
\|P_\alpha S P_\alpha - S\|_1 \longrightarrow 0.
\]
\end{lemma}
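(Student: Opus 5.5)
The plan is a standard $\varepsilon/3$ argument: handle finite-rank $S$ first, then pass to general trace-class $S$ using density of finite-rank operators in $\mathcal T(\mathcal H)$ together with a uniform bound on the compression maps $S\mapsto P_\alpha S P_\alpha$.

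First I record the uniform bound. Since $\|P_\alpha\|\le 1$, the ideal property of the trace norm gives $\|P_\alpha T P_\alpha\|_1\le \|T\|_1$ for every $T\in\mathcal T(\mathcal H)$. Next, finite-rank operators are $\|\cdot\|_1$-dense in $\mathcal T(\mathcal H)$; this follows from the singular value decomposition $S=\sum_n s_n\,\langle\cdot,e_n\rangle f_n$ with $\sum_n s_n<\infty$, by truncating the series. So, given $\varepsilon>0$, I choose a finite-rank $F$ with $\|S-F\|_1<\varepsilon$ and estimate
\[
\|P_\alpha S P_\alpha - S\|_1\le \|P_\alpha (S-F) P_\alpha\|_1+\|P_\alpha F P_\alpha - F\|_1+\|F-S\|_1
\le 2\varepsilon+\|P_\alpha F P_\alpha - F\|_1 .
\]
It therefore suffices to prove the claim for finite-rank $F$.

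By linearity, I reduce further to rank-one operators $F=\xi\otimes\eta^*$, i.e.\ $F x=\langle x,\eta\rangle\xi$. Then $P_\alpha F P_\alpha = (P_\alpha\xi)\otimes(P_\alpha\eta)^*$, and using $\|u\otimes v^*\|_1=\|u\|\,\|v\|$ I get
\[
\|P_\alpha F P_\alpha - F\|_1\le \|(P_\alpha\xi-\xi)\otimes (P_\alpha\eta)^*\|_1+\|\xi\otimes(P_\alpha\eta-\eta)^*\|_1
\le \|P_\alpha\xi-\xi\|\,\|\eta\|+\|\xi\|\,\|P_\alpha\eta-\eta\|.
\]
This tends to $0$ by strong convergence. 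Since a finite sum of nets tending to $0$ tends to $0$, the finite-rank case follows, and taking $\limsup_\alpha$ in the previous display (with $\varepsilon$ arbitrary) completes the proof.

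There is no serious obstacle. The only points needing care are the density of finite-rank operators in trace norm and the identity $\|u\otimes v^*\|_1=\|u\|\|v\|$, both standard. Note also that the uniform bound $\|P_\alpha T P_\alpha\|_1\le\|T\|_1$ is exactly what lets nets, rather than sequences, be handled without any uniformity issues.
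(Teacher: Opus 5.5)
Your proof is correct and takes the same route as the paper. Both argue via the rank-one estimate from $\|u\otimes v^*\|_1=\|u\|\,\|v\|$, extend to finite rank by linearity, and reach general trace-class $S$ through trace-norm density of finite-rank operators and the contraction bound $\|P_\alpha T P_\alpha\|_1\le\|T\|_1$, giving the identical $2\varepsilon$ estimate; the only difference is that you present the reduction top-down while the paper builds up from the rank-one case.
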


\begin{proof}
We begin with rank-one operators. Given $u,v\in\mathcal H$, define the rank-one operator
\[
u\otimes v^*:\mathcal H\to\mathcal H,\qquad (u\otimes v^*)(x):=\langle x,v\rangle\,u.
\]
It is well-known (and easy to check) that $u\otimes v^*$ is trace-class and
\begin{equation}\label{eq:rankone-tracenorm}
\|u\otimes v^*\|_1=\|u\|\,\|v\|.
\end{equation}
Moreover, for any $u,v,u',v'\in\mathcal H$ one has the estimate
\begin{equation}\label{eq:rankone-diff}
\|(u\otimes v^*)-(u'\otimes (v')^*)\|_1
\le \|(u-u')\otimes v^*\|_1+\|u'\otimes (v-v')^*\|_1
=\|u-u'\|\,\|v\|+\|u'\|\,\|v-v'\|,
\end{equation}
where we used \eqref{eq:rankone-tracenorm} in the last step.

Now suppose $S=u\otimes v^*$. Then $P_\alpha S P_\alpha=(P_\alpha u)\otimes (P_\alpha v)^*$, hence by
\eqref{eq:rankone-diff},
\[
\|P_\alpha S P_\alpha - S\|_1
=\|(P_\alpha u)\otimes (P_\alpha v)^* - u\otimes v^*\|_1
\le \|P_\alpha u-u\|\,\|P_\alpha v\|+\|u\|\,\|P_\alpha v-v\|.
\]
Since $P_\alpha\to I$ strongly, we have $P_\alpha u\to u$ and $P_\alpha v\to v$ in norm; also
$\|P_\alpha v\|\le \|v\|$. Therefore $\|P_\alpha S P_\alpha - S\|_1\to 0$ for rank-one $S$.

Next, if $S$ is finite-rank, write it as a finite sum of rank-one operators
$S=\sum_{j=1}^N u_j\otimes v_j^*$. Using linearity and the triangle inequality,
\[
\|P_\alpha S P_\alpha - S\|_1
\le \sum_{j=1}^N \|P_\alpha (u_j\otimes v_j^*) P_\alpha - (u_j\otimes v_j^*)\|_1 \longrightarrow 0,
\]
by the rank-one case.

Finally, let $S\in\mathcal T(\mathcal H)$ be arbitrary and fix $\varepsilon>0$.
Choose a finite-rank operator $S_0$ such that $\|S-S_0\|_1<\varepsilon$.
Since $P_\alpha$ is a contraction, $\|P_\alpha(S-S_0)P_\alpha\|_1\le \|S-S_0\|_1<\varepsilon$, hence
\[
\|P_\alpha S P_\alpha - S\|_1
\le \|P_\alpha(S-S_0)P_\alpha\|_1+\|P_\alpha S_0 P_\alpha - S_0\|_1+\|S_0-S\|_1
< 2\varepsilon+\|P_\alpha S_0 P_\alpha - S_0\|_1.
\]
Letting $\alpha\to\infty$ and using the finite-rank case gives
$\limsup_\alpha \|P_\alpha S P_\alpha - S\|_1 \le 2\varepsilon$.
Since $\varepsilon>0$ was arbitrary, we conclude $\|P_\alpha S P_\alpha - S\|_1\to 0$.
\end{proof}

We can now state the reconstruction theorem.
\begin{theorem}\label{thm:Reconstruction}
Let $\mathcal C=\{\mathcal C_{Q,P}\}_{(Q,P)\in\mathcal D}$ be a coherent family of corner cones.
Then $\mathsf K(\mathcal C)$ is a convex cut--pad stable cone of normal maps and is point-ultraweakly closed.

Moreover, $\mathsf K(\mathcal C)$ recovers the prescribed corner data: for every $(Q,P)\in\mathcal D$,
\[
\bigl(\mathsf K(\mathcal C)\bigr)_{Q,P}=\mathcal C_{Q,P}.
\]

Conversely, if $\mathsf K\subseteq \mathcal N(\mathcal B(\mathcal H),\mathcal B(\mathcal K))$ is a convex cut--pad stable 
  point-ultraweakly closed cone and $\mathcal C^{\mathsf K}:=\{\mathsf K_{Q,P}\}_{(Q,P)\in\mathcal D}$ denotes its family
of corner cones, then $\mathcal C^{\mathsf K}$ is coherent and
\[
\mathsf K=\mathsf K(\mathcal C^{\mathsf K}).
\]

In particular, the assignments
\[
\mathsf K\ \longmapsto\ \{\mathsf K_{Q,P}\}_{(Q,P)\in\mathcal D}
\qquad\text{and}\qquad
\mathcal C\ \longmapsto\ \mathsf K(\mathcal C)
\]
are inverse bijections between convex cut--pad stable  point-ultraweakly closed cones of normal maps and coherent families of corner
cones.
\end{theorem}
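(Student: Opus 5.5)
I will verify the forward direction first, then the converse, with Lemma~\ref{lem:trace-approx} doing the analytic work in the closedness and reconstruction steps.

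\textbf{Forward direction: $\mathsf K(\mathcal C)$ is a closed cut--pad stable cone.} Convexity of $\mathsf K(\mathcal C)$ is immediate, since $\Phi\mapsto\Phi_{Q,P}$ is linear and each $\mathcal C_{Q,P}$ is a convex cone. For point-ultraweak closedness, take a net $\Phi_\alpha\to\Phi$ point-ultraweakly with $\Phi_\alpha\in\mathsf K(\mathcal C)$. Then for each $X\in P\mathcal B(\mathcal H)P$ we have $Q\Phi_\alpha(X)Q\to Q\Phi(X)Q$ ultraweakly. Because $\ran Q$ is finite-dimensional, this convergence is in norm, so $(\Phi_\alpha)_{Q,P}\to\Phi_{Q,P}$ in the finite-dimensional space $\mathcal L_{Q,P}$. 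Closedness (C1) then gives $\Phi_{Q,P}\in\mathcal C_{Q,P}$.

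For cut--pad stability, fix $\Phi\in\mathsf K(\mathcal C)$ and $(Q_0,P_0)$, and compute $(\Phi^{(Q_0,P_0)})_{Q,P}$ for an arbitrary $(Q,P)$. Let $P'=P\vee P_0$ and $Q'=Q\vee Q_0$. The map $X\mapsto Q\,Q_0\Phi(P_0XP_0)Q_0\,Q$ on $\mathcal B(\ran P)$ factors in three steps:
\begin{enumerate}
\item[(1)] start from $\Phi_{Q_0,P_0}\in\mathcal C_{Q_0,P_0}$;
\item[(2)] pad it to $(Q',P')$, which stays in $\mathcal C_{Q',P'}$ by (C3);
\item[(3)] restrict to $(Q,P)$, which stays in $\mathcal C_{Q,P}$ by (C2).
\end{enumerate}
Indeed, $\iota(\Phi_{Q_0,P_0})(T)=Q'Q_0\Phi(P_0TP_0)Q_0Q'=Q_0\Phi(P_0TP_0)Q_0$, and compressing by $Q$ on $T\in\mathcal B(\ran P)$ gives exactly the required expression.

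\textbf{Forward direction: recovering the corner data.} The inclusion $(\mathsf K(\mathcal C))_{Q,P}\subseteq\mathcal C_{Q,P}$ is by definition. For the reverse inclusion, given $\psi\in\mathcal C_{Q,P}$, define $\Phi(T):=Q\psi(PTP)Q$. This is normal, since it factors through finite-dimensional spaces. Any corner $(Q_1,P_1)$ of $\Phi$ is obtained from $\psi$ by the same pad-then-restrict route through $(Q\vee Q_1,P\vee P_1)$, so it lies in $\mathcal C_{Q_1,P_1}$. Hence $\Phi\in\mathsf K(\mathcal C)$ and $\Phi_{Q,P}=\psi$.

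\textbf{Converse.} Let $\mathsf K$ be a convex cut--pad stable point-ultraweakly closed cone.
\begin{itemize}
\item \emph{Convexity} of each $\mathsf K_{Q,P}$ is clear.
\item \emph{(C2):} $\rho(\Phi_{Q',P'})=\Phi_{Q,P}$, so restriction maps $\mathsf K_{Q',P'}$ into $\mathsf K_{Q,P}$.
\item \emph{(C3):} $\iota(\Phi_{Q,P})=(\Phi^{(Q,P)})_{Q',P'}$, and $\Phi^{(Q,P)}\in\mathsf K$ by cut--pad stability.
\item \emph{Closedness of $\mathsf K_{Q,P}$:} if $(\Phi_n)_{Q,P}\to\psi$, then $\Phi_n^{(Q,P)}\in\mathsf K$ converges point-ultraweakly (even in norm) to $T\mapsto Q\psi(PTP)Q$, because $\Phi_n^{(Q,P)}$ depends only on $(\Phi_n)_{Q,P}$ through a fixed linear map. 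Closedness of $\mathsf K$ then puts the limit in $\mathsf K$, and its $(Q,P)$-corner is $\psi$.
\end{itemize}
The inclusion $\mathsf K\subseteq\mathsf K(\mathcal C^{\mathsf K})$ is trivial.

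\textbf{Main obstacle: $\mathsf K(\mathcal C^{\mathsf K})\subseteq\mathsf K$.} Take $\Phi$ with every corner $\Phi_{Q,P}\in\mathsf K_{Q,P}$. For each $(Q,P)$ choose $\Psi\in\mathsf K$ with $\Psi_{Q,P}=\Phi_{Q,P}$. Then $\Phi^{(Q,P)}=\Psi^{(Q,P)}\in\mathsf K$, since both sides are determined by the common corner. It remains to show $\Phi^{(Q,P)}\to\Phi$ point-ultraweakly along the directed set $\mathcal D$. Fix $X\in\mathcal B(\mathcal H)$ and $\omega\in\mathcal T(\mathcal K)$. Then
\[
\Tr(\omega Q\Phi(PXP)Q)-\Tr(\omega\Phi(X))=\Tr\big((Q\omega Q-\omega)\Phi(PXP)\big)+\Tr\big(\omega(\Phi(PXP)-\Phi(X))\big).
\]
\begin{itemize}
\item \emph{First term.} The net $Q\to I$ strongly along $\mathcal D$, so Lemma~\ref{lem:trace-approx} gives $\|Q\omega Q-\omega\|_1\to0$. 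This term therefore tends to zero, using the norm bound $\|\Phi(PXP)\|\le\|\Phi\|\|X\|$; normal maps are bounded.
\item \emph{Second term.} Write $\Tr(\omega\Phi(Y))=\Tr(\Phi_*(\omega)Y)$ with predual map $\Phi_*(\omega)\in\mathcal T(\mathcal H)$. The term becomes $\Tr\big((P\Phi_*(\omega)P-\Phi_*(\omega))X\big)$, which tends to zero by Lemma~\ref{lem:trace-approx} applied with $P$.
\end{itemize}
Closedness of $\mathsf K$ then yields $\Phi\in\mathsf K$. The bijection statement follows by combining $(\mathsf K(\mathcal C))_{Q,P}=\mathcal C_{Q,P}$ with $\mathsf K=\mathsf K(\mathcal C^{\mathsf K})$.
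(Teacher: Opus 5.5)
Your proposal is correct and follows essentially the same route as the paper. You pad and then restrict through $(Q\vee Q_0,P\vee P_0)$ for cut--pad stability and corner recovery, and use finite-dimensionality of $\mathcal B(\ran Q)$ for closedness in both directions. For the reverse inclusion you combine the identity $\Phi^{(Q,P)}=\Psi^{(Q,P)}\in\mathsf K$ with point-ultraweak convergence $\Phi^{(Q,P)}\to\Phi$ along $\mathcal D$. The only cosmetic difference is in the term $\Tr\bigl(\omega(\Phi(PXP)-\Phi(X))\bigr)$: you handle it with the predual map $\Phi_*$ and Lemma~\ref{lem:trace-approx} applied to $\Phi_*(\omega)$. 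The paper instead uses the same lemma to get $PXP\to X$ ultraweakly and then invokes normality of $\Phi$, which amounts to the same thing.
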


\begin{proof}
\medskip
\noindent{(A) We first show that $\mathsf K(\mathcal C)$ is a convex cut--pad stable point-ultraweakly closed cone.}
\smallskip

\emph{Convex cone.}
Let $\Phi_1,\Phi_2\in\mathsf K(\mathcal C)$ and $\lambda\ge 0$.
For every $(Q,P)\in\mathcal D$ we have $(\Phi_1)_{Q,P},(\Phi_2)_{Q,P}\in\mathcal C_{Q,P}$ and $\mathcal C_{Q,P}$ is a cone, hence
\[
(\Phi_1+\Phi_2)_{Q,P}=(\Phi_1)_{Q,P}+(\Phi_2)_{Q,P}\in\mathcal C_{Q,P},
\qquad
(\lambda\Phi_1)_{Q,P}=\lambda(\Phi_1)_{Q,P}\in\mathcal C_{Q,P}.
\]
Thus $\Phi_1+\Phi_2,\lambda\Phi_1\in\mathsf K(\mathcal C)$.

\emph{cut--pad stability.}
Fix $\Phi\in\mathsf K(\mathcal C)$ and fix a finite-rank pair $(Q_0,P_0)\in\mathcal D$.
We must show $\Phi^{(Q_0,P_0)}\in\mathsf K(\mathcal C)$, i.e.\ for every $(Q,P)\in\mathcal D$,
\begin{align}\label{eq:need-corner-final}
\bigl(\Phi^{(Q_0,P_0)}\bigr)_{Q,P}\in \mathcal C_{Q,P}.
\end{align}
So fix $(Q,P)\in\mathcal D$.
Choose finite-rank projections $Q'\ge Q,Q_0$ and $P'\ge P,P_0$ (e.g.\ onto $\ran(Q)+\ran(Q_0)$ and $\ran(P)+\ran(P_0)$).
Since $\Phi\in\mathsf K(\mathcal C)$, we have $\Phi_{Q_0,P_0}\in\mathcal C_{Q_0,P_0}$.
By padding coherence (C3),
\[
\iota_{(Q_0,P_0)\to(Q',P')}\bigl(\Phi_{Q_0,P_0}\bigr)\in \mathcal C_{Q',P'}.
\]
By restriction coherence (C2),
\begin{align}\label{eq:in-CQP-final}
\rho_{(Q',P')\to(Q,P)}\!\left(\iota_{(Q_0,P_0)\to(Q',P')}\bigl(\Phi_{Q_0,P_0}\bigr)\right)\in \mathcal C_{Q,P}.
\end{align}
We now compute that the map in \eqref{eq:in-CQP-final} is exactly the corner map in \eqref{eq:need-corner-final}.
It follows from definition of $\rho$ and $\iota$ that  for $X\in\mathcal B(\ran P)\subseteq \mathcal B(\ran P')$,
\begin{align*}
\Bigl[\rho_{(Q',P')\to(Q,P)}\!\left(\iota_{(Q_0,P_0)\to(Q',P')}\bigl(\Phi_{Q_0,P_0}\bigr)\right)\Bigr](X)
&=Q\cdot \Bigl[\iota_{(Q_0,P_0)\to(Q',P')}\bigl(\Phi_{Q_0,P_0}\bigr)\Bigr](X)\cdot Q\\
&=Q\cdot Q'\,\Phi_{Q_0,P_0}(P_0XP_0)\,Q'\cdot Q\\
&=Q\,\Phi_{Q_0,P_0}(P_0XP_0)\,Q\qquad(\text{since $QQ'=Q$})\\
&=Q\,Q_0\,\Phi(P_0XP_0)\,Q_0\,Q\\
&=\bigl(\Phi^{(Q_0,P_0)}\bigr)_{Q,P}(X),
\end{align*}
where the last equality follows directly from the definitions of the cut--pad map and the corner map.
Therefore \eqref{eq:in-CQP-final} implies \eqref{eq:need-corner-final}, proving $\Phi^{(Q_0,P_0)}\in\mathsf K(\mathcal C)$.

\emph{Ultraweak closedness.}
Let $(\Phi_i)$ be a net in $\mathsf K(\mathcal C)$ converging to a normal map $\Phi$ in the point-ultraweak topology,
i.e.\ $\omega(\Phi_i(T))\to\omega(\Phi(T))$ for all $T\in\mathcal B(\mathcal H)$ and all $\omega\in\mathcal T(\mathcal K)$.

Fix $(Q,P)\in\mathcal D$ and $X\in\mathcal B(\ran P)\subseteq\mathcal B(\mathcal H)$.
Let $\omega\in\mathcal T(\ran Q)$.  Via the canonical identification
\[
\mathcal B(\ran Q)_*\;\cong\;\mathcal T(\ran Q),\qquad
\omega(Z)=\Tr(\omega Z)\quad(Z\in\mathcal B(\ran Q)),
\]
we may regard $\omega$ as a normal linear functional on the corner algebra
$Q\mathcal B(\mathcal K)Q\cong \mathcal B(\ran Q)$.

To compare with point-ultraweak convergence on $\mathcal B(\mathcal K)$, extend $\omega$ to a
trace-class operator on $\mathcal K$ by zero outside $\ran Q$, namely
\[
\widetilde\omega:=Q\,\omega\,Q\ \in\ \mathcal T(\mathcal K).
\]
Equivalently, the associated normal functional on $\mathcal B(\mathcal K)$ is
\[
\widetilde\omega(Y):=\Tr(\widetilde\omega\,Y)\qquad (Y\in\mathcal B(\mathcal K)).
\]
Since $\widetilde\omega$ is supported on $\ran Q$, we have for every $Y\in\mathcal B(\mathcal K)$,
\[
\widetilde\omega(QYQ)=\Tr(\widetilde\omega\,QYQ)=\Tr(\widetilde\omega\,Y),
\]
and hence, when $Y\in Q\mathcal B(\mathcal K)Q$, the functional $\widetilde\omega$ agrees with the original
$\omega$ under the identification $Q\mathcal B(\mathcal K)Q\cong\mathcal B(\ran Q)$.

Applying this to $Y=\Phi_i(X)$ and $Y=\Phi(X)$ gives
\[
\omega\bigl((\Phi_i)_{Q,P}(X)\bigr)
=\omega\bigl(Q\,\Phi_i(X)\,Q\bigr)
=\widetilde\omega\bigl(\Phi_i(X)\bigr),
\qquad
\omega\bigl(\Phi_{Q,P}(X)\bigr)
=\omega\bigl(Q\,\Phi(X)\,Q\bigr)
=\widetilde\omega\bigl(\Phi(X)\bigr).
\]
Since $\Phi_i\to\Phi$ point-ultraweakly, we have
$\widetilde\omega(\Phi_i(X))\to \widetilde\omega(\Phi(X))$, and therefore
\[
\omega\!\bigl((\Phi_i)_{Q,P}(X)\bigr)
\longrightarrow
\omega\!\bigl(\Phi_{Q,P}(X)\bigr).
\]

Thus $(\Phi_i)_{Q,P}(X)\to \Phi_{Q,P}(X)$ ultraweakly in $Q\mathcal B(\mathcal K)Q\cong \mathcal B(\ran Q)$,
which is finite-dimensional; hence this convergence is in norm.

Choose a basis $(E_\alpha)_{\alpha=1}^{m^2}$ of $\mathcal B(\ran P)$ (where $m=\dim\ran P$). Since
$\|(\Phi_i)_{Q,P}(E_\alpha)-\Phi_{Q,P}(E_\alpha)\|\to 0$ for each $\alpha$ and $\mathcal L_{Q,P}$ is finite-dimensional,
it follows that $(\Phi_i)_{Q,P}\to \Phi_{Q,P}$ in the operator norm of $\mathcal L_{Q,P}$.
Because each $(\Phi_i)_{Q,P}\in\mathcal C_{Q,P}$ and $\mathcal C_{Q,P}$ is closed (C1), we obtain $\Phi_{Q,P}\in\mathcal C_{Q,P}$.
As $(Q,P)$ was arbitrary, $\Phi\in\mathsf K(\mathcal C)$, so $\mathsf K(\mathcal C)$ is point-ultraweakly closed.

\medskip
\noindent{(B) We show the exact recovery of corners: $(\mathsf K(\mathcal C))_{Q,P}=\mathcal C_{Q,P}$.}
\smallskip

Fix $(Q,P)\in\mathcal D$.
The inclusion $(\mathsf K(\mathcal C))_{Q,P}\subseteq \mathcal C_{Q,P}$ holds by definition of $\mathsf K(\mathcal C)$.
For the reverse inclusion, take $\phi\in\mathcal C_{Q,P}$ and define a normal map
$\widehat\phi:\mathcal B(\mathcal H)\to\mathcal B(\mathcal K)$ by the extension-by-zero
\begin{align}\label{eq:hatphi-final}
\widehat\phi(T):=Q\,\phi(PTP)\,Q,\qquad T\in\mathcal B(\mathcal H).
\end{align}

First note that $(\widehat\phi)_{Q,P}=\phi$.
For $X\in\mathcal B(\ran P)$ we have $PXP=X$, so
\[
(\widehat\phi)_{Q,P}(X)=Q\,\widehat\phi(X)\,Q=Q\,(Q\,\phi(X)\,Q)\,Q=\phi(X).
\]
Second,  we claim $\widehat\phi\in\mathsf K(\mathcal C)$.   Fix an arbitrary $(R,S)\in\mathcal D$.
Choose finite-rank $Q''\ge Q,R$ and $P''\ge P,S$.
For $X\in\mathcal B(\ran S)\subseteq\mathcal B(\ran P'')$ we compute
\begin{align*}
(\widehat\phi)_{R,S}(X)
&=R\,\widehat\phi(X)\,R
=R\,Q\,\phi(PXP)\,Q\,R\\
&=R\,Q''\,\phi(PXP)\,Q''\,R
=\Bigl[\rho_{(Q'',P'')\to(R,S)}\bigl(\iota_{(Q,P)\to(Q'',P'')}(\phi)\bigr)\Bigr](X),
\end{align*}
where we used $Q\le Q''$ and $R\le Q''$ to insert $Q''$ without changing the value.
By (C3), $\iota_{(Q,P)\to(Q'',P'')}(\phi)\in\mathcal C_{Q'',P''}$, and by (C2) its restriction to $(R,S)$ lies in $\mathcal C_{R,S}$.
Hence $(\widehat\phi)_{R,S}\in\mathcal C_{R,S}$ for all $(R,S)$, so $\widehat\phi\in\mathsf K(\mathcal C)$.
Therefore $\phi=(\widehat\phi)_{Q,P}\in (\mathsf K(\mathcal C))_{Q,P}$, proving $(\mathsf K(\mathcal C))_{Q,P}=\mathcal C_{Q,P}$.

\medskip
\noindent{(C) Classification and converse direction.}
\smallskip

Let $\mathsf K\subseteq \mathcal N(\mathcal B(\mathcal H),\mathcal B(\mathcal K))$ be a convex cut--pad stable point-ultraweakly
closed cone and define $\mathcal C^{\mathsf K}:=\{\mathsf K_{Q,P}\}_{(Q,P)\in\mathcal D}$.

\emph{Coherence of $\mathcal C^{\mathsf K}$.}
Convexity of each $\mathsf K_{Q,P}$ is immediate.

\smallskip
\emph{(C1) Closedness of each $\mathsf K_{Q,P}$.}
For finite-rank $(Q,P)$ and $\psi\in\mathcal L_{Q,P}$ define its extension-by-zero
\[
\widehat\psi(T):=Q\,\psi(PTP)\,Q,\qquad T\in\mathcal B(\mathcal H).
\]
We claim the equivalence
\begin{align}\label{eq:corner-iff-extension-final}
\psi\in \mathsf K_{Q,P}
\quad\Longleftrightarrow\quad
\widehat\psi\in\mathsf K.
\end{align}
Indeed, if $\psi\in\mathsf K_{Q,P}$ then $\psi=\Phi_{Q,P}$ for some $\Phi\in\mathsf K$. By heredity,
$\Phi^{(Q,P)}\in\mathsf K$, and $\Phi^{(Q,P)}=\widehat\psi$ by direct inspection.
Conversely, if $\widehat\psi\in\mathsf K$, then $(\widehat\psi)_{Q,P}=\psi$, so $\psi\in\mathsf K_{Q,P}$.

Now let $\psi_i\to\psi$ in $\mathcal L_{Q,P}$.
For fixed $T\in\mathcal B(\mathcal H)$ and $\omega\in\mathcal T(\mathcal K)$,
\[
\omega\!\bigl(\widehat\psi_i(T)-\widehat\psi(T)\bigr)
=\omega\!\bigl(Q\,(\psi_i-\psi)(PTP)\,Q\bigr)\longrightarrow 0,
\]
because $(\psi_i-\psi)(PTP)\to 0$ in norm in the finite-dimensional space $\mathcal B(\ran Q)$.
Hence $\widehat\psi_i\to\widehat\psi$ point-ultraweakly. Since $\mathsf K$ is point-ultraweakly closed,
$\widehat\psi\in\mathsf K$ whenever each $\widehat\psi_i\in\mathsf K$.
By \eqref{eq:corner-iff-extension-final}, this shows $\mathsf K_{Q,P}$ is closed in $\mathcal L_{Q,P}$.

\smallskip
\emph{(C2) Restriction coherence.}
If $(Q,P)\preceq(Q',P')$ and
$\psi\in\mathsf K_{Q',P'}$, write
$\psi=\Phi_{Q',P'}$ for some $\Phi\in\mathsf K$. Then
\[
\rho_{(Q',P')\to(Q,P)}(\psi)
=
\Phi_{Q,P}
\in\mathsf K_{Q,P},
\]
and hence
\[
\rho_{(Q',P')\to(Q,P)}
\bigl(\mathsf K_{Q',P'}\bigr)
\subseteq
\mathsf K_{Q,P}.
\]

\smallskip
\emph{(C3) Padding coherence.}
If $(Q,P)\preceq(Q',P')$ and $\phi\in\mathsf K_{Q,P}$, write $\phi=\Phi_{Q,P}$ with $\Phi\in\mathsf K$.
By heredity, $\Phi^{(Q,P)}\in\mathsf K$, and for $T\in\mathcal B(\ran P')$,
\begin{align*}
\bigl(\Phi^{(Q,P)}\bigr)_{Q',P'}(T)
&=Q'\,\Phi^{(Q,P)}(T)\,Q'
=Q'\,Q\,\Phi(PTP)\,Q\,Q'\\
&=Q'\,\phi(PTP)\,Q'
=\iota_{(Q,P)\to(Q',P')}(\phi)(T),
\end{align*}
using $Q\le Q'$.
Hence $\iota_{(Q,P)\to(Q',P')}(\phi)\in\mathsf K_{Q',P'}$.
Thus $\mathcal C^{\mathsf K}$ is coherent.

\medskip
\emph{Equality $\mathsf K=\mathsf K(\mathcal C^{\mathsf K})$.}
The inclusion $\mathsf K\subseteq \mathsf K(\mathcal C^{\mathsf K})$ is immediate.
For the reverse inclusion, let $\Phi\in \mathsf K(\mathcal C^{\mathsf K})$.
Then for each $(Q,P)\in\mathcal D$ we have $\Phi_{Q,P}\in \mathsf K_{Q,P}$, hence there exists $\Psi\in\mathsf K$
with $\Psi_{Q,P}=\Phi_{Q,P}$.
By heredity, $\Psi^{(Q,P)}\in\mathsf K$.
Moreover, for every $T\in\mathcal B(\mathcal H)$ we have $PTP\in P\mathcal B(\mathcal H)P$, hence
\[
\Psi^{(Q,P)}(T)=Q\,\Psi(PTP)\,Q
=Q\,\Phi(PTP)\,Q
=\Phi^{(Q,P)}(T),
\]
because $\Psi_{Q,P}=\Phi_{Q,P}$ means $Q\,\Psi(X)\,Q=Q\,\Phi(X)\,Q$ for all $X\in P\mathcal B(\mathcal H)P$.
Therefore $\Phi^{(Q,P)}\in\mathsf K$ for all $(Q,P)\in\mathcal D$.

Now we show that as $(Q,P)\in\mathcal D$ increases (i.e.\ $P\to I_{\mathcal H}$ and $Q\to I_{\mathcal K}$ strongly),
\begin{align}\label{eq:uw-approx-final}
\Phi^{(Q,P)} \longrightarrow \Phi \quad\text{in the point-ultraweak topology.}
\end{align}

Fix $T\in\mathcal B(\mathcal H)$ and $\omega\in\mathcal T(\mathcal K)$.
Using trace duality and Lemma~\ref{lem:trace-approx} for projections on $\mathcal H$, we have for any $S\in\mathcal T(\mathcal H)$,
\[
\Tr\!\bigl(S\,PTP\bigr)=\Tr\!\bigl(PSP\,T\bigr)\longrightarrow \Tr(ST),
\]
so $PTP\to T$ ultraweakly in $\mathcal B(\mathcal H)$.
Since $\Phi$ is normal, $\Phi(PTP)\to \Phi(T)$ ultraweakly in $\mathcal B(\mathcal K)$.
Moreover, Lemma~\ref{lem:trace-approx}, applied to \(Q\) on \(\mathcal K\), gives
\[
\|Q\omega Q-\omega\|_1\longrightarrow 0.
\]
Hence
\[
\begin{aligned}
&\left|
\omega\!\bigl(\Phi^{(Q,P)}(T)\bigr)
-\omega\!\bigl(\Phi(T)\bigr)
\right|\\
&\quad=
\left|
\Tr\!\bigl(Q\omega Q\,\Phi(PTP)\bigr)
-\Tr\!\bigl(\omega\Phi(T)\bigr)
\right|\\
&\quad\leq
\left|
\Tr\!\bigl((Q\omega Q-\omega)\Phi(PTP)\bigr)
\right|
+
\left|
\Tr\!\bigl(\omega(\Phi(PTP)-\Phi(T))\bigr)
\right|\\
&\quad\leq
\|Q\omega Q-\omega\|_1\,\|\Phi\|\,\|T\|
+
\left|
\Tr\!\bigl(\omega(\Phi(PTP)-\Phi(T))\bigr)
\right|
\longrightarrow 0.
\end{aligned}
\]
Indeed, the first term tends to zero by Lemma~\ref{lem:trace-approx}, while the second tends to zero because
\(\Phi(PTP)\to\Phi(T)\) ultraweakly. Thus
\[
\Phi^{(Q,P)}\longrightarrow\Phi
\quad\text{in the point-ultraweak topology},
\]
which proves \eqref{eq:uw-approx-final}.

\medskip
Since each $\Phi^{(Q,P)}\in\mathsf K$ and $\mathsf K$ is point-ultraweakly closed, \eqref{eq:uw-approx-final} implies $\Phi\in\mathsf K$.
Hence $\mathsf K(\mathcal C^{\mathsf K})\subseteq \mathsf K$, proving $\mathsf K=\mathsf K(\mathcal C^{\mathsf K})$.

Combining (A), (B), and the last equality shows that the assignments
$\mathcal C\mapsto \mathsf K(\mathcal C)$ and $\mathsf K\mapsto \{\mathsf K_{Q,P}\}$ are inverse bijections.
\end{proof}

The previous theorem relied on the assumption that $\mathsf K$ is point-ultraweakly closed. It is natural to ask how the reconstruction process behaves without this hypothesis. The following result shows that, for an arbitrary cut--pad stable cone, reconstruction from the norm closures of its finite-dimensional corner cones yields exactly the point-ultraweak closure of the original cone.

\begin{theorem}\label{thm:corner-closure}
Let $\mathsf K\subseteq \mathcal N(\mathcal B(\mathcal H),\mathcal B(\mathcal K))$ be a convex cut--pad stable cone (not necessarily point-ultraweakly closed).
Define the family of closed corner cones $\overline{\mathcal C}=\{\overline{\mathcal C}_{Q,P}\}_{(Q,P)\in\mathcal D}$ by taking the norm closure of the corners of $\mathsf K$:
\[
\overline{\mathcal C}_{Q,P} := \overline{(\mathsf K_{Q,P})}^{\|\cdot\|} \subseteq \mathcal L_{Q,P}.
\]
Then $\overline{\mathcal C}$ is a coherent family, and the reconstructed infinite cone is exactly the point-ultraweak closure of $\mathsf K$:
\[
\mathsf K(\overline{\mathcal C}) = \overline{\mathsf K}^{\,\mathrm{puw}}.
\]
\end{theorem}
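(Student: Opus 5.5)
The argument has three parts: check that $\overline{\mathcal C}$ is coherent, prove the inclusion $\overline{\mathsf K}^{\,\mathrm{puw}}\subseteq\mathsf K(\overline{\mathcal C})$, and then prove the reverse inclusion.

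\emph{Coherence.} Each $\overline{\mathcal C}_{Q,P}$ is the norm closure of a convex cone in a finite-dimensional space, so it is a closed convex cone and (C1) holds. The maps $\rho_{(Q',P')\to(Q,P)}$ and $\iota_{(Q,P)\to(Q',P')}$ are linear maps between finite-dimensional spaces, hence continuous. Continuous maps send closures into closures, so (C2) and (C3) for $\overline{\mathcal C}$ follow from the same inclusions for the unclosed family $\{\mathsf K_{Q,P}\}$. Those inclusions were verified in part (C) of the proof of Theorem~\ref{thm:Reconstruction}, and that verification used only convexity and cut--pad stability, not closedness.

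\emph{First inclusion.} Clearly $\mathsf K\subseteq\mathsf K(\overline{\mathcal C})$, since $\Phi_{Q,P}\in\mathsf K_{Q,P}\subseteq\overline{\mathcal C}_{Q,P}$ for every $\Phi\in\mathsf K$. By Theorem~\ref{thm:Reconstruction}(A), $\mathsf K(\overline{\mathcal C})$ is point-ultraweakly closed, and therefore $\overline{\mathsf K}^{\,\mathrm{puw}}\subseteq\mathsf K(\overline{\mathcal C})$.

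\emph{Reverse inclusion.} Let $\Phi\in\mathsf K(\overline{\mathcal C})$. The proof of Theorem~\ref{thm:Reconstruction} shows $\Phi^{(Q,P)}\to\Phi$ point-ultraweakly along $\mathcal D$, so it suffices to show $\Phi^{(Q,P)}\in\overline{\mathsf K}^{\,\mathrm{puw}}$ for each $(Q,P)$; the closure is closed, which then gives $\Phi\in\overline{\mathsf K}^{\,\mathrm{puw}}$. Fix $(Q,P)$. Since $\Phi_{Q,P}\in\overline{\mathcal C}_{Q,P}$, there is a sequence $\psi_n\in\mathsf K_{Q,P}$ with $\psi_n\to\Phi_{Q,P}$ in norm.
\begin{itemize}
\item By the equivalence \eqref{eq:corner-iff-extension-final}, whose forward direction uses only cut--pad stability, the extensions $\widehat{\psi_n}$ lie in $\mathsf K$.
\item Part (C1) of the earlier proof shows $\widehat{\psi_n}\to\widehat{\Phi_{Q,P}}$ point-ultraweakly.
\item Direct inspection gives $\widehat{\Phi_{Q,P}}=\Phi^{(Q,P)}$.
\end{itemize}
Hence $\Phi^{(Q,P)}\in\overline{\mathsf K}^{\,\mathrm{puw}}$, as required.

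\emph{Main obstacle.} The delicate point is the last step: a net in $\overline{\mathsf K}^{\,\mathrm{puw}}$ converging to $\Phi$. It works because the closure is itself closed; we never need the closure to be cut--pad stable or a cone. The net $\Phi^{(Q,P)}$ is obtained by approximating $\Phi$ corner by corner, and the convergence $\Phi^{(Q,P)}\to\Phi$ was established for an arbitrary normal map $\Phi$, so it applies here without change.
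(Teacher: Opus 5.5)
Your proposal is correct and follows the paper's own proof essentially step for step. Coherence comes from continuity of $\rho$ and $\iota$ applied to the unclosed inclusions, and the first inclusion from point-ultraweak closedness of $\mathsf K(\overline{\mathcal C})$ via Theorem~\ref{thm:Reconstruction}. The reverse inclusion comes from the cut--pad net $\Phi^{(Q,P)}\to\Phi$, together with norm-approximating sequences $\psi_n\in\mathsf K_{Q,P}$ whose extensions $\widehat{\psi_n}$ lie in $\mathsf K$ by cut--pad stability.
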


\begin{proof}
First, we verify that $\overline{\mathcal C}$ satisfies the coherence conditions of Definition~\ref{def:coherent-family}:
\begin{enumerate}
\item[(C1)] Since $\mathsf K_{Q,P}$ is a convex cone, its norm closure $\overline{\mathcal C}_{Q,P}$ is a closed convex cone.
\item[(C2)] Let $(Q,P)\preceq (Q',P')$. The restriction map
$\rho=\rho_{(Q',P')\to(Q,P)}$ is a linear map between
finite-dimensional spaces, hence continuous. By the definition of the
corner cones,
\[
\rho(\mathsf K_{Q',P'})\subseteq \mathsf K_{Q,P}.
\]
Therefore,
\[
\rho(\overline{\mathcal C}_{Q',P'})
=
\rho\bigl(\overline{\mathsf K_{Q',P'}}\bigr)
\subseteq
\overline{\rho(\mathsf K_{Q',P'})}
\subseteq
\overline{\mathsf K_{Q,P}}
=
\overline{\mathcal C}_{Q,P}.
\]
\item[(C3)] Similarly, the padding map
$\iota=\iota_{(Q,P)\to(Q',P')}$ is continuous. Cut--pad stability implies
\[
\iota(\mathsf K_{Q,P})\subseteq \mathsf K_{Q',P'}.
\]
Hence,
\[
\iota(\overline{\mathcal C}_{Q,P})
=
\iota\bigl(\overline{\mathsf K_{Q,P}}\bigr)
\subseteq
\overline{\iota(\mathsf K_{Q,P})}
\subseteq
\overline{\mathsf K_{Q',P'}}
=
\overline{\mathcal C}_{Q',P'}.
\]
\end{enumerate}
Thus $\overline{\mathcal C}$ is coherent, and by Theorem~\ref{thm:Reconstruction}, the reconstructed cone $\mathsf K(\overline{\mathcal C})$ is point-ultraweakly closed.

We now prove the equality $\mathsf K(\overline{\mathcal C}) = \overline{\mathsf K}^{\,\mathrm{puw}}$ by showing two inclusions.

\noindent{Inclusion $\overline{\mathsf K}^{\,\mathrm{puw}} \subseteq \mathsf K(\overline{\mathcal C})$.}
Let $\Phi \in \mathsf K$. For any $(Q,P)$, the compression $\Phi_{Q,P}$ lies in $\mathsf K_{Q,P}$, which is contained in $\overline{\mathcal C}_{Q,P}$. Therefore $\Phi \in \mathsf K(\overline{\mathcal C})$ by definition.
Since $\mathsf K(\overline{\mathcal C})$ is point-ultraweakly closed and contains $\mathsf K$, it must contain the point-ultraweak closure $\overline{\mathsf K}^{\,\mathrm{puw}}$.

\noindent{Inclusion $\mathsf K(\overline{\mathcal C}) \subseteq \overline{\mathsf K}^{\,\mathrm{puw}}$.}
Let $\Phi \in \mathsf K(\overline{\mathcal C})$. By definition, for every $(Q,P) \in \mathcal D$,
\[
\Phi_{Q,P} \in \overline{\mathcal C}_{Q,P} = \overline{\mathsf K_{Q,P}}.
\]
Consider the net of cut--pad approximations defined by $\Phi^{(Q,P)}(T) := Q\,\Phi(PTP)\,Q$. As established in the proof of Theorem~\ref{thm:Reconstruction} (equation \eqref{eq:uw-approx-final}), $\Phi^{(Q,P)} \to \Phi$ in the point-ultraweak topology. Since $\overline{\mathsf K}^{\,\mathrm{puw}}$ is closed, it suffices to show that each term $\Phi^{(Q,P)}$ belongs to $\overline{\mathsf K}^{\,\mathrm{puw}}$.

Fix $(Q,P)$. Define the extension-by-zero map $\widehat{\psi}(T) := Q\,\psi(PTP)\,Q$ for $\psi \in \mathcal L_{Q,P}$. The image of this map is a finite-dimensional subspace of normal maps, where the norm topology and point-ultraweak topology coincide. Thus, the map $\psi \mapsto \widehat{\psi}$ is continuous. Note that $\Phi^{(Q,P)} = \widehat{\Phi_{Q,P}}$.

Since $\Phi_{Q,P} \in \overline{\mathsf K_{Q,P}}$, there exists a sequence $(\psi_n)_{n=1}^\infty \subseteq \mathsf K_{Q,P}$ such that $\psi_n \to \Phi_{Q,P}$ in norm.
By continuity,
\[
\widehat{\psi_n} \longrightarrow \widehat{\Phi_{Q,P}} = \Phi^{(Q,P)} \quad \text{point-ultraweakly.}
\]
Crucially, for each $n$, since $\psi_n \in \mathsf K_{Q,P}$, there exists (by definition of the corner cone) some global map $\Psi_n \in \mathsf K$ such that $(\Psi_n)_{Q,P} = \psi_n$. By the hereditary property of $\mathsf K$, the cut--pad map of this witness belongs to $\mathsf K$:
\[
\widehat{\psi_n} = \Psi_n^{(Q,P)} \in \mathsf K.
\]
Thus, $\Phi^{(Q,P)}$ is the ultraweak limit of the sequence $(\widehat{\psi_n})_n$ contained in $\mathsf K$, implying $\Phi^{(Q,P)} \in \overline{\mathsf K}^{\,\mathrm{puw}}$.
Finally, since $\Phi = \lim_{(Q,P)} \Phi^{(Q,P)}$, we conclude $\Phi \in \overline{\mathsf K}^{\,\mathrm{puw}}$.
\end{proof}

  \begin{corollary}\label{cor:sequential-reconstruction}
Suppose $\mathcal H$ and $\mathcal K$ are separable Hilbert spaces. Let $(P_n)_{n=1}^\infty$ and $(Q_n)_{n=1}^\infty$ be increasing sequences of finite-rank projections converging strongly to $I_{\mathcal H}$ and $I_{\mathcal K}$ respectively.
For any convex cut--pad stable cone $\mathsf K$ (closed or not), a normal map $\Phi$ belongs to the ultraweak closure $\overline{\mathsf K}^{\,\mathrm{puw}}$ if and only if
\[
\Phi_{Q_n, P_n} \in \overline{(\mathsf K_{Q_n, P_n})}^{\|\cdot\|} \quad \text{for all } n \in \mathbb{N}.
\]
\end{corollary}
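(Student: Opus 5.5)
The plan is to reduce the corollary to Theorem~\ref{thm:corner-closure}: it suffices to show that membership of $\Phi_{Q,P}$ in $\overline{\mathcal C}_{Q,P}=\overline{(\mathsf K_{Q,P})}^{\|\cdot\|}$ for all $(Q,P)\in\mathcal D$ is equivalent to membership along the single cofinal sequence $(Q_n,P_n)$. The forward implication is immediate. If $\Phi\in\overline{\mathsf K}^{\,\mathrm{puw}}=\mathsf K(\overline{\mathcal C})$, then by definition $\Phi_{Q,P}\in\overline{\mathcal C}_{Q,P}$ for every corner, and in particular for $(Q_n,P_n)$.

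For the converse, assume $\Phi_{Q_n,P_n}\in\overline{\mathcal C}_{Q_n,P_n}$ for all $n$. I will argue directly, reusing the approximation step of the proof of Theorem~\ref{thm:corner-closure} but along the sequence. Fix $n$. Since $\overline{\mathcal C}_{Q_n,P_n}$ is the closure of $\mathsf K_{Q_n,P_n}$, choose $\psi_k\in\mathsf K_{Q_n,P_n}$ with $\psi_k\to\Phi_{Q_n,P_n}$ in norm. Equivalence \eqref{eq:corner-iff-extension-final}, whose forward direction uses only cut--pad stability and not closedness, gives $\widehat{\psi_k}\in\mathsf K$. Continuity of $\psi\mapsto\widehat\psi$ then gives $\widehat{\psi_k}\to\widehat{\Phi_{Q_n,P_n}}=\Phi^{(Q_n,P_n)}$ point-ultraweakly. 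Hence $\Phi^{(Q_n,P_n)}\in\overline{\mathsf K}^{\,\mathrm{puw}}$ for every $n$.

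It remains to show $\Phi^{(Q_n,P_n)}\to\Phi$ point-ultraweakly as $n\to\infty$. Here I repeat the estimate behind \eqref{eq:uw-approx-final} along the sequence. Lemma~\ref{lem:trace-approx} applies to the sequences $P_n\to I_{\mathcal H}$ and $Q_n\to I_{\mathcal K}$, which are nets, and gives $\|P_nSP_n-S\|_1\to0$ and $\|Q_n\omega Q_n-\omega\|_1\to0$. The first yields $P_nTP_n\to T$ ultraweakly, so normality gives $\Phi(P_nTP_n)\to\Phi(T)$ ultraweakly. The same splitting of the difference as in that proof, together with $\|\Phi(P_nTP_n)\|\le\|\Phi\|\|T\|$, then finishes the argument. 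Here $\|\Phi\|<\infty$ because a normal map is bounded, being weak$^*$-continuous between dual spaces. Since $\overline{\mathsf K}^{\,\mathrm{puw}}$ is closed, $\Phi$ belongs to it.

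The main obstacle is conceptual rather than computational: the hypothesis gives information only on the sequence of corners, not on all corners. The direct argument above sidesteps this by never invoking $\mathsf K(\overline{\mathcal C})$ in the converse direction. Alternatively, one could check that every corner $(Q,P)$ lies below some $(Q_n,P_n)$ and use (C2). That route needs care, however, because strong convergence $P_n\to I$ does not force $\ran P\subseteq\ran P_n$ for a given finite-rank $P$. For that reason I would use the direct approximation argument. The remaining routine point is that separability is not actually needed beyond guaranteeing the existence of such sequences.
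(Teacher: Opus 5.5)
Your proof is correct and follows the paper's argument. You approximate $\Phi_{Q_n,P_n}$ in norm by elements of $\mathsf K_{Q_n,P_n}$, lift these via cut--pad stability to $\widehat{\psi_k}\in\mathsf K$ to get $\Phi^{(Q_n,P_n)}\in\overline{\mathsf K}^{\,\mathrm{puw}}$, and pass to the limit $\Phi^{(Q_n,P_n)}\to\Phi$. Your explicit re-run of that convergence estimate along the sequence, which need not be cofinal in $\mathcal D$, is extra care the paper leaves implicit when it cites the reconstruction theorem.
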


\begin{proof}
If $\Phi \in \overline{\mathsf K}^{\,\mathrm{puw}}$, then for any $n$, continuity of the compression implies $\Phi_{Q_n, P_n} \in \overline{(\mathsf K_{Q_n, P_n})}^{\|\cdot\|}$.

Conversely, assume the condition holds for all $n$. Consider the sequence of cut--pad approximations $\Phi_n(T) := Q_n \Phi(P_n T P_n) Q_n$. As shown in Theorem~\ref{thm:Reconstruction}, $\Phi_n \to \Phi$ in the point-ultraweak topology.
We claim each $\Phi_n$ belongs to $\overline{\mathsf K}^{\,\mathrm{puw}}$.
Note that $\Phi_n$ is the extension-by-zero of the corner map $\psi_n := \Phi_{Q_n, P_n}$.
By hypothesis, $\psi_n$ is in the norm closure of $\mathsf K_{Q_n, P_n}$. Hence, there exist maps $\psi_{n,k} \in \mathsf K_{Q_n, P_n}$ such that $\psi_{n,k} \to \psi_n$ in norm as $k\to\infty$.
By definition of the corner cone, each $\psi_{n,k}$ is the compression of some $\Psi_{n,k} \in \mathsf K$. By heredity, the cut--pad extensions satisfy $\Psi_{n,k}^{(Q_n, P_n)} \in \mathsf K$.
The extension map is continuous, so $\Psi_{n,k}^{(Q_n, P_n)} \to \Phi_n$ point-ultraweakly as $k\to\infty$. Thus $\Phi_n$ is a limit of elements in $\mathsf K$, so $\Phi_n \in \overline{\mathsf K}^{\,\mathrm{puw}}$.
Finally, since $\overline{\mathsf K}^{\,\mathrm{puw}}$ is closed and $\Phi_n \to \Phi$, we conclude $\Phi \in \overline{\mathsf K}^{\,\mathrm{puw}}$.
\end{proof}


\section{Corner Polar Duality}

The reconstruction results of Section~3 show that a
point-ultraweakly closed cut--pad stable cone
$\mathsf K\subseteq \mathcal N(\mathcal B(\mathcal H),\mathcal B(\mathcal K))$
is completely determined by its finite-dimensional corners
$\{\mathsf K_{Q,P}\}_{(Q,P)\in\mathcal D}$. In this section,  we derive a finite-corner separation criterion. For each corner we use the
ordinary finite-dimensional polar cone, expressed through the Choi--trace
pairing, and combine these local separation criteria with the reconstruction
theorem.
\medskip

For $(Q,P)\in\mathcal D$, $W\in\mathcal T(\ran P\otimes\ran Q)$, and
$\psi\in\mathcal L_{Q,P}$, we define
\[
\langle W,\psi\rangle_{Q,P}
:=
\Re\,\Tr(WC_\psi).
\]
For each $(Q,P)\in\mathcal D$ define the polar cone of $\mathsf K_{Q,P}$   by
\[
(\mathsf K_{Q,P})^\circ
:=
\Bigl\{
W\in\mathcal T(\ran P\otimes\ran Q)\;:\;
\langle W,\psi\rangle_{Q,P}\ge 0\ \ \forall\,\psi\in\mathsf K_{Q,P}
\Bigr\}.
\]
We view $(\mathsf K_{Q,P})^\circ$ as a cone in $\mathcal T(\mathcal H\otimes\mathcal K)$ by extension by zero outside
$\ran(P)\otimes\ran(Q)$, using the identification above.

Define the \emph{finite-corner polar family} of $\mathsf K$ by
\[
\mathsf K_{\mathrm{fin}}^\circ
:=
\bigsqcup_{(Q,P)\in\mathcal D}
\bigl\{(Q,P,W): W\in(\mathsf K_{Q,P})^\circ\bigr\}.
\]
Thus an element of $\mathsf K_{\mathrm{fin}}^\circ$ retains the finite
corner on which the corresponding witness is defined. We define its
corner bipolar by
\[
(\mathsf K_{\mathrm{fin}}^\circ)^\circ
:=
\Bigl\{
\Phi\in \mathcal N(\mathcal B(\mathcal H),\mathcal B(\mathcal K)):\
\langle W,\Phi_{Q,P}\rangle_{Q,P}\ge 0
\ \text{for every }(Q,P,W)\in\mathsf K_{\mathrm{fin}}^\circ
\Bigr\}.
\]

\begin{theorem}\label{thm:corner-bipolar}
Let $\mathsf K\subseteq \mathcal N(\mathcal B(\mathcal H),\mathcal B(\mathcal K))$ be a convex cut--pad stable cone
which is point-ultraweakly closed. Then
\[
\boxed{\qquad
\mathsf K
=
(\mathsf K_{\mathrm{fin}}^\circ)^\circ
\qquad}.
\]
Equivalently, for $\Phi\in \mathcal N(\mathcal B(\mathcal H),\mathcal B(\mathcal K))$ one has
\[
\Phi\in\mathsf K
\quad\Longleftrightarrow\quad
\langle W,\Phi_{Q,P}\rangle_{Q,P}\ge 0
\ \ \text{for all }(Q,P)\in\mathcal D\text{ and all }W\in(\mathsf K_{Q,P})^\circ.
\]
\end{theorem}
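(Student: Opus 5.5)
The plan is to reduce the global statement to a family of finite-dimensional bipolar theorems, one per corner, and then glue them together with Theorem~\ref{thm:Reconstruction}. Since $\mathsf K$ is a convex cut--pad stable point-ultraweakly closed cone, Theorem~\ref{thm:Reconstruction} gives that the corner family $\mathcal C^{\mathsf K}=\{\mathsf K_{Q,P}\}$ is coherent. In particular, each $\mathsf K_{Q,P}$ is a closed convex cone in $\mathcal L_{Q,P}$, and
\[
\mathsf K=\mathsf K(\mathcal C^{\mathsf K})=\{\Phi:\ \Phi_{Q,P}\in\mathsf K_{Q,P}\ \forall (Q,P)\in\mathcal D\}.
\]
It therefore suffices to prove, for each fixed corner $(Q,P)$, the local bipolar identity
\[
\mathsf K_{Q,P}=\{\psi\in\mathcal L_{Q,P}:\ \langle W,\psi\rangle_{Q,P}\ge 0\ \ \forall W\in(\mathsf K_{Q,P})^\circ\}.
\]
Intersecting these identities over all $(Q,P)\in\mathcal D$ then yields $\mathsf K=(\mathsf K_{\mathrm{fin}}^\circ)^\circ$ directly from the two displayed descriptions.

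The inclusion ``$\subseteq$'' in the local identity is immediate from the definition of the polar. This also gives $\mathsf K\subseteq(\mathsf K_{\mathrm{fin}}^\circ)^\circ$ at the global level, since $\Phi\in\mathsf K$ forces $\Phi_{Q,P}\in\mathsf K_{Q,P}$.

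For ``$\supseteq$'' I would use finite-dimensional separation. View $\mathcal L_{Q,P}$ as a finite-dimensional real inner product space with $\langle\psi,\phi\rangle_{\mathcal L}=\Re\Tr(C_\psi^*C_\phi)$. Suppose $\psi_0\notin\mathsf K_{Q,P}$. Because $\mathsf K_{Q,P}$ is a closed convex cone, the Hahn--Banach (or nearest-point projection) separation theorem gives some $\phi\in\mathcal L_{Q,P}$ with
\[
\langle \phi,\psi\rangle_{\mathcal L}\ge 0\ \ \forall\psi\in\mathsf K_{Q,P},\qquad \langle\phi,\psi_0\rangle_{\mathcal L}<0.
\]
This uses the cone property: a separating functional must be nonnegative on a cone, and we may take the threshold to be $0$. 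Now set $W:=C_\phi^*\in\mathcal B(\ran P\otimes\ran Q)$, which is automatically trace class in finite dimensions. Then
\[
\langle W,\psi\rangle_{Q,P}=\Re\Tr(C_\phi^*C_\psi)=\langle\phi,\psi\rangle_{\mathcal L}.
\]
Hence $W\in(\mathsf K_{Q,P})^\circ$ and $\langle W,\psi_0\rangle_{Q,P}<0$, which proves the local bipolar identity.

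The main point requiring care, though not a real obstacle, is that the real pairing $(W,\psi)\mapsto\Re\Tr(WC_\psi)$ represents every real-linear functional on $\mathcal L_{Q,P}$. This holds because the Choi map is a linear isomorphism and $\Re\Tr(A^*B)$ is a real inner product on $\mathcal B(\ran P\otimes\ran Q)$, so the Riesz representation in this real space applies. The remaining work is bookkeeping: the extension-by-zero identification of $W$ inside $\mathcal T(\mathcal H\otimes\mathcal K)$ plays no role in the pairing, which is computed on the corner. Closedness of $\mathsf K_{Q,P}$, which separation requires, is exactly where point-ultraweak closedness of $\mathsf K$ enters, via (C1) of Theorem~\ref{thm:Reconstruction}. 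Without it one would only recover $\overline{\mathsf K}^{\,\mathrm{puw}}$, by Theorem~\ref{thm:corner-closure}.
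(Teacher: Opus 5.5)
Your proposal is correct and follows essentially the same route as the paper. Both arguments use Theorem~\ref{thm:Reconstruction} to reduce membership in $\mathsf K$ to cornerwise membership in the closed cones $\mathsf K_{Q,P}$, then separate on a single corner and realize the separating real functional as a Choi--trace witness via Riesz representation for $\Re\Tr(A^*B)$; packaging this as a local bipolar identity intersected over all corners, rather than the paper's contrapositive, is only a cosmetic difference.
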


\begin{proof}
First note that since $\mathsf K$ is convex, cut--pad stable, and point-ultraweakly closed, the reconstruction Theorem~\ref{thm:Reconstruction}  yields
\begin{equation}\label{eq:corner-membership-proof-final}
\Phi\in\mathsf K
\quad\Longleftrightarrow\quad
\Phi_{Q,P}\in \mathsf K_{Q,P}\ \text{ for all }(Q,P)\in\mathcal D.
\end{equation}

Moreover, by Theorem~\ref{thm:Reconstruction}, the family
$\{\mathsf K_{Q,P}\}_{(Q,P)\in\mathcal D}$ is coherent; in particular,
each $\mathsf K_{Q,P}$ is a closed convex cone in the finite-dimensional
space $\mathcal L_{Q,P}$.

\noindent
We now prove the inclusion $\mathsf K\subseteq(\mathsf K_{\mathrm{fin}}^\circ)^\circ$.
Let $\Phi\in\mathsf K$. Fix $(Q,P)\in\mathcal D$ and $W\in(\mathsf K_{Q,P})^\circ$.
Since $\Phi_{Q,P}\in\mathsf K_{Q,P}$, the definition of $(\mathsf K_{Q,P})^\circ$ gives
\[
\langle W,\Phi_{Q,P}\rangle_{Q,P}\ge 0.
\]
As $(Q,P)$ and $W$ were arbitrary, $\Phi\in(\mathsf K_{\mathrm{fin}}^\circ)^\circ$.

 \noindent
To prove the reverse inclusion $\mathsf K\supseteq(\mathsf K_{\mathrm{fin}}^\circ)^\circ$ we proceed contrapositive.
Assume $\Phi\notin\mathsf K$. By \eqref{eq:corner-membership-proof-final} there exists $(Q,P)\in\mathcal D$ with
$\Phi_{Q,P}\notin \mathsf K_{Q,P}$.  As shown above, $\mathsf K_{Q,P}$ is a closed convex cone in the finite-dimensional
real vector space $(\mathcal L_{Q,P})_{\mathbb R}$. Hence there exists a nonzero real-linear functional
$\ell:(\mathcal L_{Q,P})_{\mathbb R}\to\mathbb R$ such that
\begin{equation}\label{eq:sep-functional-proof-final}
\ell(\psi)\ge 0\quad \forall\,\psi\in\mathsf K_{Q,P},
\qquad\text{and}\qquad
\ell(\Phi_{Q,P})<0.
\end{equation}

Let $J_{Q,P}:\mathcal L_{Q,P}\to \mathcal B(\ran P\otimes\ran Q)$ be the Choi isomorphism $J_{Q,P}(\psi)=C_\psi$,
and set $\tilde\ell:=\ell\circ J_{Q,P}^{-1}$. Equip $\mathcal B(\ran P\otimes\ran Q)$ with the real Hilbert--Schmidt
inner product $\langle X,Y\rangle_{\mathrm{HS},\mathbb R}:=\Re\,\Tr(X^*Y)$. By the real Riesz representation theorem,
there exists $A\in\mathcal B(\ran P\otimes\ran Q)$ such that
\[
\tilde\ell(Z)=\Re\,\Tr(A^*Z)\qquad \forall\,Z\in\mathcal B(\ran P\otimes\ran Q).
\]
Set $W_0:=A^*\in\mathcal T(\ran P\otimes\ran Q)$ (since $\mathcal B=\mathcal T$ in finite dimension). Then
\begin{equation}\label{eq:choi-rep-proof-final}
\ell(\psi)=\Re\,\Tr(W_0\,C_\psi)=\langle W_0,\psi\rangle_{Q,P}\qquad \forall\,\psi\in\mathcal L_{Q,P}.
\end{equation}
From \eqref{eq:sep-functional-proof-final} and \eqref{eq:choi-rep-proof-final} we obtain
\[
\langle W_0,\psi\rangle_{Q,P}\ge 0\ \ \forall\,\psi\in\mathsf K_{Q,P},
\qquad\text{and}\qquad
\langle W_0,\Phi_{Q,P}\rangle_{Q,P}<0.
\]
Thus $W_0\in(\mathsf K_{Q,P})^\circ$, so
\[
(Q,P,W_0)\in\mathsf K_{\mathrm{fin}}^\circ.
\]
Moreover,
\[
\langle W_0,\Phi_{Q,P}\rangle_{Q,P}<0.
\]
Hence, by the definition of the corner bipolar,
\[
\Phi\notin(\mathsf K_{\mathrm{fin}}^\circ)^\circ.
\]

\medskip\noindent
The two inclusions established above show that $\mathsf K=(\mathsf K_{\mathrm{fin}}^\circ)^\circ$.
The displayed equivalence in the statement follows immediately from the definition of
$(\mathsf K_{\mathrm{fin}}^\circ)^\circ$.
\end{proof}

 \begin{remark} 
Theorem~\ref{thm:corner-bipolar} combines the reconstruction theorem
with ordinary finite-dimensional separation on each corner. Its point is
that no global separating functional on the space of normal maps is needed:
if $\Phi\notin\mathsf K$, then non-membership is already detected on a
single finite-dimensional corner by a Choi--trace witness from the
corresponding polar cone.

Thus, for point-ultraweakly closed cut--pad stable cones, the tagged family
of finite-corner polar witnesses is sufficient both to detect
non-membership and to recover the original cone.
\end{remark}
\begin{corollary}\label{cor:finite-witness}
Let $\mathsf K$ be a convex cut--pad stable  point-ultraweakly closed cone. If $\Phi\notin\mathsf K$, then there exist
$(Q,P)\in\mathcal D$ and a nonzero $W\in(\mathsf K_{Q,P})^\circ\subseteq \mathcal T(\ran P\otimes\ran Q)$ such that
\[
\langle W,\Phi_{Q,P}\rangle_{Q,P}<0.
\]
Equivalently, the failure of $\Phi$ to belong to $\mathsf K$ is detected by a
witness on a single finite-dimensional corner.
\end{corollary}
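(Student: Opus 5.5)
This corollary follows directly from Theorem~\ref{thm:corner-bipolar}, and more precisely from the contrapositive argument in its proof. I will take $\Phi\in\mathcal N(\mathcal B(\mathcal H),\mathcal B(\mathcal K))$ with $\Phi\notin\mathsf K$. Since $\mathsf K$ is convex, cut--pad stable and point-ultraweakly closed, Theorem~\ref{thm:corner-bipolar} gives $\mathsf K=(\mathsf K_{\mathrm{fin}}^\circ)^\circ$. Hence $\Phi\notin(\mathsf K_{\mathrm{fin}}^\circ)^\circ$.

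Unwinding the definition of the corner bipolar, there is a tagged element $(Q,P,W)\in\mathsf K_{\mathrm{fin}}^\circ$ with $\langle W,\Phi_{Q,P}\rangle_{Q,P}<0$. This means $(Q,P)\in\mathcal D$ and $W\in(\mathsf K_{Q,P})^\circ\subseteq\mathcal T(\ran P\otimes\ran Q)$.

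Equivalently, and more constructively, I can retrace the steps of that proof. The reconstruction Theorem~\ref{thm:Reconstruction} produces a corner with $\Phi_{Q,P}\notin\mathsf K_{Q,P}$. Coherence gives that $\mathsf K_{Q,P}$ is a closed convex cone in a finite-dimensional real space. Finite-dimensional separation then gives a functional, and the real Riesz representation through the Choi isomorphism turns that functional into the witness $W_0$.

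The only additional point to check is that $W\neq 0$. This is immediate: if $W=0$, then $\langle W,\Phi_{Q,P}\rangle_{Q,P}=\Re\Tr(0\cdot C_{\Phi_{Q,P}})=0$, which contradicts the strict inequality. The closing ``equivalently'' sentence of the corollary is just a restatement, since $\ran P\otimes\ran Q$ is finite-dimensional.

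I expect no real obstacle here. The substantive work, namely the reduction to a single corner and the closedness of the corner cones needed for strict separation, is already contained in Theorems~\ref{thm:Reconstruction} and~\ref{thm:corner-bipolar}.
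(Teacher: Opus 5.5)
Your proposal is correct and follows the paper's proof: you unwind $\Phi\notin\mathsf K=(\mathsf K_{\mathrm{fin}}^\circ)^\circ$ from Theorem~\ref{thm:corner-bipolar} to get a tagged witness $(Q,P,W)$ with $\langle W,\Phi_{Q,P}\rangle_{Q,P}<0$. Your observation that the strict inequality forces $W\neq 0$ supplies a detail the paper leaves implicit.
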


\begin{proof}
Since $\Phi\notin\mathsf K=(\mathsf K_{\mathrm{fin}}^\circ)^\circ$ by Theorem~\ref{thm:corner-bipolar}, there exist
$(Q,P)\in\mathcal D$ and $W\in(\mathsf K_{Q,P})^\circ$ such that
$\langle W,\Phi_{Q,P}\rangle_{Q,P}<0$.
\end{proof}


\section{Examples and Discussion}
We apply the preceding results to completely positive and decomposable maps. The completely
positive case recovers the finite-corner form of Choi positivity, while the
decomposable case leads to finite-corner PPT witnesses and is shown to agree
with St{\o}rmer's classical notion of decomposability. We conclude with a
finite-rank completely positive example showing that the point-ultraweak
closedness assumption in the reconstruction theorem is essential.

\begin{example}\label{ex:CP}
Let $\mathsf{CP}\subseteq \mathcal N(\mathcal B(\mathcal H),\mathcal B(\mathcal K))$ denote the cone of
\emph{normal completely positive} maps. This cone is convex, cut--pad stable, and point-ultraweakly closed.

Fix $(Q,P)\in\mathcal D$. Then $\mathsf{CP}_{Q,P}\subseteq \mathcal L_{Q,P}$ consists of completely positive maps
$\psi:\mathcal B(\ran P)\to\mathcal B(\ran Q)$. In finite dimension, $\psi$ is completely positive if and only if its
Choi operator is positive \cite{Choi1975,Jamiolkowski1972}:
\[
\mathsf{CP}_{Q,P}
=
\bigl\{\psi\in\mathcal L_{Q,P}:\ C_\psi\ge 0\bigr\}.
\]
(The positivity criterion is independent of the particular choice of matrix units.)

Moreover, with respect to the real pairing
$\langle W,\psi\rangle_{Q,P}=\Re\,\Tr(WC_\psi)$, one has
\[
(\mathsf{CP}_{Q,P})^\circ
=
\left\{
W\in\mathcal T(\ran P\otimes\ran Q):
\frac{W+W^*}{2}\ge 0
\right\}.
\]
Indeed, set
\[
H_W:=\frac{W+W^*}{2}.
\]
If $H_W\ge 0$ and $C_\psi\ge 0$, then
\[
\Re\,\Tr(WC_\psi)
=
\Tr(H_WC_\psi)
\ge 0.
\]
Conversely, if $H_W\not\ge 0$, choose $v\in\ran P\otimes\ran Q$ with
$\langle v,H_Wv\rangle<0$. By the Choi--Jamio\l{}kowski isomorphism, there exists
$\psi\in\mathsf{CP}_{Q,P}$ such that
\[
C_\psi=|v\rangle\langle v|.
\]
It follows that
\[
\Re\,\Tr(WC_\psi)
=
\Tr(H_WC_\psi)
=
\langle v,H_Wv\rangle
<0,
\]
so $W\notin(\mathsf{CP}_{Q,P})^\circ$.

Consequently,
\[
\mathsf{CP}_{\mathrm{fin}}^\circ
=
\bigsqcup_{(Q,P)\in\mathcal D}
\left\{
(Q,P,W):
W\in\mathcal T(\ran P\otimes\ran Q),\
\frac{W+W^*}{2}\ge0
\right\}.
\]
Thus a finite-corner polar witness for complete positivity consists of a
finite corner $(Q,P)$ together with an operator on that corner whose
Hermitian part is positive.

Applying Theorem~\ref{thm:corner-bipolar} to $\mathsf K=\mathsf{CP}$ yields the corner bipolar description
\[
\mathsf{CP}
=
(\mathsf{CP}_{\mathrm{fin}}^\circ)^\circ,
\]
i.e.\ a normal map $\Phi$ is completely positive if and only if
\[
\Re\,\Tr\!\bigl(W\,C_{\Phi_{Q,P}}\bigr)\ge 0
\]
for all $(Q,P)\in\mathcal D$ and all
$W\in\mathcal T(\ran P\otimes\ran Q)$ with
\[
\frac{W+W^*}{2}\ge 0.
\]
Indeed, the admissible witnesses include all skew-adjoint operators $W=iK$,
with $K=K^*$, as well as their negatives; hence the above inequalities force
$C_{\Phi_{Q,P}}$ to be self-adjoint. Testing then against all positive
self-adjoint $W$ forces $C_{\Phi_{Q,P}}\ge 0$. Thus, equivalently,
$\Phi\in\mathsf{CP}$ if and only if $C_{\Phi_{Q,P}}\ge 0$ for every finite
corner $(Q,P)$.
\end{example}

\begin{example}\label{ex:decomposable}
Fix $(Q,P)\in\mathcal D$ and use the fixed matrix units on $\mathcal B(\ran P)\cong M_m$ to define the
\emph{transpose} map $\tau_P:\mathcal B(\ran P)\to\mathcal B(\ran P)$.
A linear map $\psi\in\mathcal L_{Q,P}$ is called \emph{completely copositive} on the corner $(Q,P)$ if
$\psi\circ\tau_P$ is completely positive, and we denote the corresponding cone by
\[
\mathsf{coCP}_{Q,P}:=
\{\psi\in\mathcal L_{Q,P}:\ \psi\circ\tau_P\in\mathsf{CP}_{Q,P}\}.
\]
Equivalently, in terms of Choi operators,
\[
\psi\in\mathsf{coCP}_{Q,P}
\quad\Longleftrightarrow\quad
(\tau_P\otimes\mathrm{id})(C_\psi)\ge 0,
\]
where $(\tau_P\otimes\mathrm{id})$ is the partial transpose on
$\mathcal B(\ran P\otimes\ran Q)$ with respect to the chosen matrix units on $\ran P$.

Define the \emph{decomposable corner cone} by
\[
\mathsf{Dec}_{Q,P}
:=
\mathsf{CP}_{Q,P}+\mathsf{coCP}_{Q,P}
\subseteq\mathcal L_{Q,P}.
\]
Thus $\psi\in\mathsf{Dec}_{Q,P}$ if and only if
$\psi=\psi_1+\psi_2$, where $\psi_1$ is completely positive and
$\psi_2$ is completely copositive on the corner.

The family $\{\mathsf{Dec}_{Q,P}\}_{(Q,P)\in\mathcal D}$ is coherent.
First, complete copositivity is independent of the   choice of
matrix units. Indeed, if $\tau_P'$ is another transpose map, then
\[
\tau_P'=\tau_P\circ\alpha
\]
for some $*$-automorphism $\alpha$ of $\mathcal B(\ran P)$. Hence
\[
\psi\circ\tau_P'
=
(\psi\circ\tau_P)\circ\alpha,
\]
so $\psi\circ\tau_P'$ is completely positive if and only if
$\psi\circ\tau_P$ is completely positive.

Now suppose $(Q,P)\preceq(Q',P')$. To verify restriction and padding
coherence, we can choose matrix units on $\ran P'$ extending the fixed
matrix units on $\ran P$. With respect to these matrix units,
\[
\tau_{P'}(X)=\tau_P(X),
\qquad X\in\mathcal B(\ran P),
\]
and
\[
P\tau_{P'}(T)P=\tau_P(PTP),
\qquad T\in\mathcal B(\ran P').
\]
Hence,  restriction and padding preserve both complete positivity
and complete copositivity. By the independence of the choice of matrix
units noted above, the same conclusion holds for the originally fixed
matrix units. Thus conditions {\rm (C2)} and {\rm (C3)} are satisfied.

It remains to verify condition {\rm (C1)}. The cone
$\mathsf{Dec}_{Q,P}$ is   convex. Set
\[
\Gamma_P:=\tau_P\otimes\mathrm{id}.
\]
In Choi form,
\[
\mathsf{Dec}_{Q,P}
=
\left\{
\psi\in\mathcal L_{Q,P}:
C_\psi=A+\Gamma_P(B)
\text{ for some }A,B\ge 0
\right\}.
\]
Suppose that
\[
A_n+\Gamma_P(B_n)\longrightarrow C,
\qquad A_n,B_n\ge 0.
\]
Since $\Gamma_P$ is trace preserving,
\[
\Tr(A_n)+\Tr(B_n)
=
\Tr\!\bigl(A_n+\Gamma_P(B_n)\bigr).
\]
The right-hand side is bounded, and hence the positive sequences
$(A_n)$ and $(B_n)$ are bounded. By finite dimensionality, after passing
to a subsequence,
\[
A_n\longrightarrow A,
\qquad
B_n\longrightarrow B
\]
for some $A,B\ge 0$. Therefore
\[
C=A+\Gamma_P(B),
\]
so $\mathsf{Dec}_{Q,P}$ is closed.

Hence, by the finite-to-infinite reconstruction
Theorem~\ref{thm:Reconstruction}, there exists a canonical convex cut--pad stable
point-ultraweakly closed cone of normal maps, denoted
\[
\mathsf{Dec}
\subseteq
\mathcal N(\mathcal B(\mathcal H),\mathcal B(\mathcal K)),
\]
whose corner cones are exactly $\mathsf{Dec}_{Q,P}$:
\[
\mathsf{Dec}_{Q,P}=(\mathsf{Dec})_{Q,P}
\qquad
\forall (Q,P)\in\mathcal D.
\]
In finite dimensions, when $P=I_{\mathcal H}$ and $Q=I_{\mathcal K}$,
this recovers the usual cone of decomposable maps.

\smallskip
We now compute the corner polars. For
$W\in\mathcal T(\ran P\otimes\ran Q)$, set
\[
H_W:=\frac{W+W^*}{2}.
\]
On the corner $(Q,P)$, Example~\ref{ex:CP} gives
\[
(\mathsf{CP}_{Q,P})^\circ
=
\left\{
W\in\mathcal T(\ran P\otimes\ran Q):
H_W\ge 0
\right\}.
\]

For the copositive cone, using that $\Gamma_P$ is self-adjoint with respect
to the trace pairing in finite dimensions,
\[
\Tr\!\bigl(W\,\Gamma_P(Z)\bigr)
=
\Tr\!\bigl(\Gamma_P(W)\,Z\bigr)
\qquad
\bigl(
W\in\mathcal T(\ran P\otimes\ran Q),\
Z\in\mathcal B(\ran P\otimes\ran Q)
\bigr),
\]
we obtain
\[
(\mathsf{coCP}_{Q,P})^\circ
=
\left\{
W\in\mathcal T(\ran P\otimes\ran Q):
\Gamma_P(H_W)\ge 0
\right\}.
\]
Indeed, $\psi\in\mathsf{coCP}_{Q,P}$ if and only if
$\Gamma_P(C_\psi)\ge 0$. Since $\Gamma_P$ is involutive, this means that
\[
C_\psi=\Gamma_P(Z)
\qquad
\text{for some }Z\ge 0.
\]
Since $\Gamma_P$ commutes with taking adjoints, it follows that
\[
\begin{aligned}
\Re\,\Tr(WC_\psi)
&=
\Re\,\Tr\!\bigl(W\Gamma_P(Z)\bigr)\\
&=
\Re\,\Tr\!\bigl(\Gamma_P(W)Z\bigr)\\
&=
\Tr\!\bigl(\Gamma_P(H_W)Z\bigr).
\end{aligned}
\]
By the self-duality of the positive cone, this expression is nonnegative
for every $Z\ge 0$ if and only if
\[
\Gamma_P(H_W)\ge 0.
\]

Since polars satisfy
\[
(\mathsf{CP}_{Q,P}+\mathsf{coCP}_{Q,P})^\circ
=
(\mathsf{CP}_{Q,P})^\circ
\cap
(\mathsf{coCP}_{Q,P})^\circ,
\]
it follows that
\[
(\mathsf{Dec}_{Q,P})^\circ
=
\left\{
W\in\mathcal T(\ran P\otimes\ran Q):
H_W\ge 0
\ \text{and}\
\Gamma_P(H_W)\ge 0
\right\}.
\]
Thus the Hermitian parts of the corner witnesses for decomposability are
  the \emph{PPT} operators on the corner, that is, the positive
operators which remain positive under partial transpose; see
\cite{Horodecki1996,Terhal2000}.

Consequently, the finite-corner polar family is
\[
\mathsf{Dec}_{\mathrm{fin}}^\circ
=
\bigsqcup_{(Q,P)\in\mathcal D}
\left\{
(Q,P,W):
W\in\mathcal T(\ran P\otimes\ran Q),\
H_W\ge0,\ \Gamma_P(H_W)\ge0
\right\}.
\]
Thus a finite-corner polar witness for decomposability consists of a
finite corner $(Q,P)$ together with an operator whose Hermitian part is
PPT on that corner. Theorem~\ref{thm:corner-bipolar} therefore yields the bipolar
description
\[
\mathsf{Dec}
=
(\mathsf{Dec}_{\mathrm{fin}}^\circ)^\circ,
\]
i.e.\ $\Phi\in\mathsf{Dec}$ if and only if
\[
\langle W,\Phi_{Q,P}\rangle_{Q,P}\ge 0
\]
for every $(Q,P)\in\mathcal D$ and every
$W\in\mathcal T(\ran P\otimes\ran Q)$ satisfying
\[
\frac{W+W^*}{2}\ge 0
\qquad\text{and}\qquad
\Gamma_P\!\left(\frac{W+W^*}{2}\right)\ge 0.
\]
In particular, if $\Phi\notin\mathsf{Dec}$, then this failure is detected
on a single finite corner by a witness whose Hermitian part is PPT.
\end{example}

\begin{remark}\label{rem:classical-decomposability}
The cone $\mathsf{Dec}$ constructed above coincides with the cone of
normal decomposable maps in the sense of St{\o}rmer. Recall that
St{\o}rmer calls a positive map
\[
\Phi:\mathcal A\longrightarrow\mathcal B(\mathcal K)
\]
decomposable if there exist a Hilbert space $\mathcal L$, a bounded
operator $V:\mathcal K\to\mathcal L$, and a Jordan $*$-homomorphism
\[
\pi:\mathcal A\longrightarrow\mathcal B(\mathcal L)
\]
such that
\[
\Phi(X)=V^*\pi(X)V
\qquad (X\in\mathcal A);
\]
see \cite[Definition~7.1]{Stormer1963}. In the terminology of
\cite{Stormer1963}, $\pi$ is called a $C^*$-homomorphism. St{\o}rmer
proved in \cite{Stormer1982} that this is equivalent to the following
matrix condition: for every $n\in\mathbb N$,
\[
[X_{ij}]\geq0
\quad\text{and}\quad
[X_{ji}]\geq0
\]
in $M_n(\mathcal A)$ imply
\[
[\Phi(X_{ij})]\geq0
\]
in $M_n(\mathcal B(\mathcal K))$.

For maps between full matrix algebras, St{\o}rmer decomposability is
equivalent to being a sum of a completely positive map and a completely
copositive map. Indeed, a Jordan $*$-homomorphism decomposes into a
$*$-homomorphic part and a $*$-antihomomorphic part
\cite[Theorem~3.3]{Stormer1965}. Compression of the former
gives a completely positive map, while, after composition with a
transpose on the domain, compression of the latter gives a completely
positive map. The converse follows by taking the direct sum of
Stinespring representations for the completely positive part and for
the completely copositive part after composition with the transpose.

We now prove the  coincidence. Suppose first that
$\Phi\in\mathsf{Dec}$, and let
\[
[X_{ij}]\geq0
\quad\text{and}\quad
[X_{ji}]\geq0
\]
in $M_n(\mathcal B(\mathcal H))$. For finite-rank projections
$P\in\mathcal B(\mathcal H)$ and $Q\in\mathcal B(\mathcal K)$, we have
\[
[PX_{ij}P]\geq0
\quad\text{and}\quad
[PX_{ji}P]\geq0.
\]
Since
\[
\Phi_{Q,P}\in\mathsf{Dec}_{Q,P},
\]
the finite-dimensional matrix criterion gives
\[
[Q\Phi(PX_{ij}P)Q]\geq0.
\]
Letting $P\uparrow I_{\mathcal H}$ through the directed set of
finite-rank projections and using the normality of $\Phi$, together
with the ultraweak closedness of the positive cone, we obtain
\[
[Q\Phi(X_{ij})Q]\geq0.
\]
This holds for every finite-rank projection $Q$. Given arbitrary
$\xi_1,\ldots,\xi_n\in\mathcal K$, choose $Q$ whose range contains
$\xi_1,\ldots,\xi_n$. It follows that
\[
\sum_{i,j=1}^{n}
\bigl\langle\Phi(X_{ij})\xi_j,\xi_i\bigr\rangle
\geq0.
\]
Hence
\[
[\Phi(X_{ij})]\geq0.
\]
St{\o}rmer's matrix characterization  shows that $\Phi$ is
decomposable.

Conversely, suppose that $\Phi$ is a normal decomposable map in the
sense of St{\o}rmer, and fix $(Q,P)\in\mathcal D$. If
\[
[X_{ij}]\geq0
\quad\text{and}\quad
[X_{ji}]\geq0
\]
in $M_n(\mathcal B(\ran P))$, then, after identifying
$\mathcal B(\ran P)$ with $P\mathcal B(\mathcal H)P$, St{\o}rmer's
matrix characterization gives
\[
[\Phi(X_{ij})]\geq0.
\]
Compression by $I_n\otimes Q$ yields
\[
[\Phi_{Q,P}(X_{ij})]
=
[Q\Phi(X_{ij})Q]
\geq0.
\]
Thus $\Phi_{Q,P}$ is decomposable in the sense of St{\o}rmer. By the
finite-dimensional equivalence noted above,
\[
\Phi_{Q,P}
\in
\mathsf{CP}_{Q,P}+\mathsf{coCP}_{Q,P}
=
\mathsf{Dec}_{Q,P}.
\]
Since this holds for every $(Q,P)\in\mathcal D$, the definition of the
reconstructed cone gives $\Phi\in\mathsf{Dec}$.

Consequently,
\[
\mathsf{Dec}
=
\left\{
\Phi\in
\mathcal N\bigl(\mathcal B(\mathcal H),\mathcal B(\mathcal K)\bigr):
\Phi\text{ is decomposable in the sense of St{\o}rmer}
\right\}.
\]
Decomposable positive maps in low-dimensional matrix algebras were
studied by Woronowicz; in particular, he proved that every positive
map from $\mathbb{M}_2$ into $\mathbb{M}_3$ is decomposable \cite{Woronowicz1976}.
For decomposable positive projections on $C^*$-algebras, see
\cite{Stormer1980}.
\end{remark}

\subsection{Sharpness of assumptions and comparison with classical duality}\label{subsec:sharpness}

\begin{remark}\label{rem:automatic-vs-nontrivial}
By definition,
\[
\mathsf K_{Q,P}:=\{\Psi_{Q,P}:\Psi\in\mathsf K\}.
\]
Therefore, if $\Phi\in\mathsf K$ then its compression $\Phi_{Q,P}$ automatically belongs to $\mathsf K_{Q,P}$ for every
finite corner $(Q,P)\in\mathcal D$.

The converse contains the local-to-global content of the reconstruction
theorem.  The condition $\Phi_{Q,P}\in\mathsf K_{Q,P}$ means that for each
corner $(Q,P)$ one can find, possibly depending on $(Q,P)$, a map
$\Psi^{Q,P}\in\mathsf K$ whose compression agrees with $\Phi$ on that
corner:
\[
(\Psi^{Q,P})_{Q,P}=\Phi_{Q,P}.
\]
These representatives need not arise from a single element of $\mathsf K$.
Cut--pad stability replaces each of them by the corresponding cut--pad map,
while point-ultraweak closedness allows the resulting net of local
approximants to recover global membership.
\end{remark}

We next give a simple example showing that the point-ultraweak closedness
assumption in our reconstruction and bipolar results is essential: without
it, a cut--pad stable cone may have the \emph{same} finite-corner cones as
its point-ultraweak closure, so cornerwise membership tests (and hence
finite-corner duality) need not characterize membership in the original
cone.

\begin{proposition}\label{prop:need-uw-closed}
Assume $\mathcal H=\mathcal K$ is infinite-dimensional. Let $\mathsf K_{\mathrm{fr}}$ be the cone of all normal
completely positive maps $\Phi:\mathcal B(\mathcal H)\to\mathcal B(\mathcal H)$ whose range is contained in a
finite-dimensional subspace of $\mathcal B(\mathcal H)$ (equivalently, $\Phi$ has finite rank as a linear map).
Then:
\begin{enumerate}
\item[(a)] $\mathsf K_{\mathrm{fr}}$ is a convex cut--pad stable cone of normal maps, but it is \emph{not} point-ultraweakly closed.
\item[(b)] For every finite corner $(Q,P)\in\mathcal D$ one has
\[
(\mathsf K_{\mathrm{fr}})_{Q,P}=\mathsf{CP}_{Q,P}.
\]
\item[(c)] The identity map $\mathrm{id}_{\mathcal B(\mathcal H)}$ satisfies
\[
(\mathrm{id})_{Q,P}\in (\mathsf K_{\mathrm{fr}})_{Q,P}\ \ \forall(Q,P)\in\mathcal D,
\qquad\text{but}\qquad
\mathrm{id}\notin \mathsf K_{\mathrm{fr}}.
\]
\end{enumerate}
Consequently, the implication
\[
\bigl(\forall(Q,P)\in\mathcal D,\ \Phi_{Q,P}\in \mathsf K_{Q,P}\bigr)\ \Longrightarrow\ \Phi\in\mathsf K
\]
can fail if $\mathsf K$ is not assumed point-ultraweakly closed.
\end{proposition}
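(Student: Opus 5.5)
For (a), convexity is checked directly. A sum of two normal CP maps with finite-dimensional ranges $V_1,V_2$ is normal CP with range in $V_1+V_2$, and nonnegative multiples are clear. For cut--pad stability, note that $\Phi^{(Q,P)}=\alpha_Q\circ\Phi\circ\beta_P$ is normal and completely positive. Its range lies in $Q\mathcal B(\mathcal H)Q$, which is finite-dimensional, so $\Phi^{(Q,P)}\in\mathsf K_{\mathrm{fr}}$ for every $\Phi\in\mathsf K_{\mathrm{fr}}$; in fact this holds for any normal CP map $\Phi$. Failure of point-ultraweak closedness will be deduced from (c). The identity map is normal CP, and by \eqref{eq:uw-approx-final} it is the point-ultraweak limit of $\mathrm{id}^{(Q,P)}\in\mathsf K_{\mathrm{fr}}$. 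However, its range $\mathcal B(\mathcal H)$ is infinite-dimensional because $\mathcal H$ is, so $\mathrm{id}\notin\mathsf K_{\mathrm{fr}}$.

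For (b), the inclusion $(\mathsf K_{\mathrm{fr}})_{Q,P}\subseteq\mathsf{CP}_{Q,P}$ holds because compressions of CP maps are CP: $\Phi_{Q,P}$ is the composite of the inclusion $P\mathcal B(\mathcal H)P\hookrightarrow\mathcal B(\mathcal H)$, the map $\Phi$, and the compression by $Q$, all of which are CP. For the reverse inclusion, take $\psi\in\mathsf{CP}_{Q,P}$ and form the extension-by-zero $\widehat\psi(T)=Q\psi(PTP)Q$ as in the proof of Theorem~\ref{thm:Reconstruction}. This map is normal, since $T\mapsto PTP$ is ultraweakly continuous into a finite-dimensional space and $\psi$ is linear. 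It is completely positive as the composite of CP maps $\beta_P$, $\psi$ and the inclusion of $\mathcal B(\ran Q)$ into $\mathcal B(\mathcal H)$ as a corner. Its range lies in $Q\mathcal B(\mathcal H)Q$, so $\widehat\psi\in\mathsf K_{\mathrm{fr}}$. Since $(\widehat\psi)_{Q,P}=\psi$, we get $\psi\in(\mathsf K_{\mathrm{fr}})_{Q,P}$.

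For (c), the compression $(\mathrm{id})_{Q,P}(X)=QXQ$ is CP, so by (b) it lies in $(\mathsf K_{\mathrm{fr}})_{Q,P}$ for every corner. Non-membership of $\mathrm{id}$ in $\mathsf K_{\mathrm{fr}}$ was shown in (a). The final "consequently" statement is exactly (c) applied with $\mathsf K=\mathsf K_{\mathrm{fr}}$ and $\Phi=\mathrm{id}$.

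No step is a serious obstacle. The only point needing care is the normality and complete positivity of the padded map $\widehat\psi$ in (b). The cleanest route is to present it as the composite $\alpha_Q\circ\iota\circ\psi\circ\beta_P$ of normal CP maps, where $\iota$ embeds $\mathcal B(\ran Q)$ as the corner $Q\mathcal B(\mathcal H)Q$.
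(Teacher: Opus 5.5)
Your proposal is correct and follows the paper's proof essentially step for step. The only cosmetic difference is in (a): you exhibit $\mathrm{id}$ as the point-ultraweak limit of the two-parameter net $\mathrm{id}^{(Q,P)}$ by citing \eqref{eq:uw-approx-final}, whereas the paper uses the diagonal net $T\mapsto RTR$ and invokes Lemma~\ref{lem:trace-approx} directly.
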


\begin{proof}
(a) Convexity is clear. If $\Phi$ has finite-dimensional range, then for any finite-rank $P,Q$ the cut--pad map
$\Phi^{(Q,P)}(T):=Q\,\Phi(PTP)\,Q$ has range contained in $Q\,\mathrm{Ran}(\Phi)\,Q$, hence still finite-dimensional;
moreover $\Phi^{(Q,P)}$ is normal and completely positive. Thus $\mathsf K_{\mathrm{fr}}$ is hereditary.

To see that $\mathsf K_{\mathrm{fr}}$ is not point-ultraweakly closed, consider the directed set of finite-rank
projections $R$ on $\mathcal H$ ordered by inclusion and define
\[
\Phi_R(T):=RTR,\qquad T\in\mathcal B(\mathcal H).
\]
Each $\Phi_R$ is normal and completely positive, and $\mathrm{Ran}(\Phi_R)\subseteq R\mathcal B(\mathcal H)R
\cong \mathcal B(\ran R)$ is finite-dimensional; hence $\Phi_R\in\mathsf K_{\mathrm{fr}}$.
We claim $\Phi_R\to \mathrm{id}$ point-ultraweakly. Indeed, fix $T\in\mathcal B(\mathcal H)$ and a normal functional
$\omega_S(\cdot)=\Tr(S\,\cdot)$ with $S\in\mathcal T(\mathcal H)$. Then
\[
\omega_S(\Phi_R(T))=\Tr(SRTR)=\Tr(RSR\,T).
\]
Since $R\to I$ strongly and $S$ is trace-class, it follows from Lemma ~\ref{lem:trace-approx} that $\|RSR-S\|_1\to 0$; therefore
$\Tr(RSR\,T)\to \Tr(S\,T)=\omega_S(T)$.
Hence $\Phi_R\to\mathrm{id}$ in the point-ultraweak topology on normal maps.
But $\mathrm{id}$ does not have finite-dimensional range, so $\mathrm{id}\notin\mathsf K_{\mathrm{fr}}$.
Thus $\mathsf K_{\mathrm{fr}}$ is not point-ultraweakly closed.

(b) Fix $(Q,P)\in\mathcal D$. The inclusion $(\mathsf K_{\mathrm{fr}})_{Q,P}\subseteq \mathsf{CP}_{Q,P}$ is automatic,
since compressions of completely positive maps are completely positive.
For the reverse inclusion, let $\psi\in \mathsf{CP}_{Q,P}$ be arbitrary and define its cut--pad extension
\[
\widehat\psi(T):=Q\,\psi(PTP)\,Q,\qquad T\in\mathcal B(\mathcal H).
\]
Then $\widehat\psi$ is normal and completely positive, and
$\mathrm{Ran}(\widehat\psi)\subseteq Q\mathcal B(\mathcal H)Q\cong \mathcal B(\ran Q)$ is finite-dimensional, so
$\widehat\psi\in\mathsf K_{\mathrm{fr}}$. Moreover $(\widehat\psi)_{Q,P}=\psi$. Hence
$\mathsf{CP}_{Q,P}\subseteq(\mathsf K_{\mathrm{fr}})_{Q,P}$, proving equality.

(c) Let $\Phi=\mathrm{id}_{\mathcal B(\mathcal H)}$. For any $(Q,P)\in\mathcal D$, the compression $\Phi_{Q,P}$ is
completely positive, hence $\Phi_{Q,P}\in \mathsf{CP}_{Q,P}$. By (b) this equals $(\mathsf K_{\mathrm{fr}})_{Q,P}$,
so $(\mathrm{id})_{Q,P}\in (\mathsf K_{\mathrm{fr}})_{Q,P}$ for all corners. On the other hand, $\mathrm{id}$ does not
have finite-dimensional range, hence $\mathrm{id}\notin\mathsf K_{\mathrm{fr}}$.
\end{proof}

\begin{remark}
  Proposition~\ref{prop:need-uw-closed} is consistent with the Corner-Closure Identity   (Theorem~\ref{thm:corner-closure}).
Although the cone of finite-rank maps $\mathsf K_{\mathrm{fr}}$ is not point-ultraweakly closed, its corner data is identical to that of the completely positive cone:
\[
(\mathsf K_{\mathrm{fr}})_{Q,P} = \mathsf{CP}_{Q,P}.
\]

Applying the reconstruction theorem to these corners yields
$\mathsf{CP}$. Hence, by the Corner-Closure Identity,
\[
\overline{\mathsf K_{\mathrm{fr}}}^{\,\mathrm{puw}}
=
\mathsf{CP}.
\]
Thus the finite-corner reconstruction recovers the point-ultraweak
closure of $\mathsf K_{\mathrm{fr}}$, rather than the nonclosed cone
$\mathsf K_{\mathrm{fr}}$ itself.

\end{remark}

\noindent \textit{Conflict of Interest Statement.}  There is no conflict of interest.
\medskip

\noindent \textit{Ethical Statement.}  Not applicable. 

\medskip
\noindent \textit{Informed Consent.} Not applicable.

\medskip
\noindent \textit{Funding Statement.} No funding was received for conducting this study.

\medskip
\noindent\textit{Data Availability Statement.} Data sharing not applicable to this article as no datasets were generated or analyzed during the current study.


\end{document}